\documentclass[11pt]{scrartcl}
\usepackage[T1]{fontenc}
\usepackage[utf8]{inputenc}
\usepackage[margin=1in]{geometry}
\usepackage{amsmath,amssymb,amsthm}
\usepackage{mathtools}
\usepackage{thmtools}
\usepackage{hyperref}
\usepackage[
  backend=biber,
  style=numeric-comp,
  maxbibnames=99,
  doi=true,
  url=true,
]{biblatex}
\usepackage{cleveref}
\usepackage{booktabs}
\usepackage{todonotes}
\usepackage{bm}
\newcommand{\vect}[1]{\bm{#1}}
\IfFileExists{dsfont.sty}{\usepackage{dsfont}}{}

\hypersetup{colorlinks=true,linkcolor=blue!40!black,citecolor=blue!40!black,
  urlcolor=blue!40!black,
  pdftitle={An ETH-Tight, Constructive FPT Algorithm for the Cone and Polytope Intersection Problem}}

\newtheorem{theorem}{Theorem}[section]
\newtheorem{lemma}[theorem]{Lemma}

\newtheorem{corollary}[theorem]{Corollary}
\theoremstyle{definition}

\newtheorem{remark}[theorem]{Remark}
\newtheorem{claim}{Claim}[theorem]
\Crefname{claim}{Claim}{Claims}
\newenvironment{claimproof}{%
  \begin{proof}[Proof of claim]%
}{\end{proof}}
\numberwithin{equation}{section}

\title{\bfseries An ETH-Tight, Constructive FPT Algorithm for the Cone and Polytope Intersection Problem}
\author{Klaus Jansen\thanks{Kiel University, kj@informatik.uni-kiel.de, \url{https://orcid.org/0000-0001-8358-6796}}, Felix Ohnesorge\thanks{Kiel University, foh@informatik.uni-kiel.de, \url{https://orcid.org/0009-0003-8023-3380}}}
\date{}

\begin{document}

\maketitle

\begin{abstract}
    In a landmark paper, Goemans and Rothvoss (2020) established an XP algorithm running in time $\text{enc}(P)^{2^{O(d)}} \cdot \text{enc}(Q)^{O(1)}$ for the \emph{Cone and Polytope Intersection} problem: finding a vector $\vect y \in \text{int.cone}(P \cap \mathbb{Z}^d) \cap Q$ together with a sparse certificate $\vect \lambda \in \mathbb{Z}_{\ge 0}^{P \cap \mathbb{Z}^d}$ supported on at most $2^{2d+1}$ generators, where $P \subseteq \mathbb{R}^d$ is a bounded rational polyhedron and $Q \subseteq \mathbb{R}^d$ is an arbitrary rational polyhedron.
    For high-multiplicity bin packing, this gives a running time of ${|I|}^{2^{O(d)}}$, where $|I|$ denotes the encoding length of the input.
    Recently, Koana and Kumabe (2026) proved that the decision variant of this problem is fixed-parameter tractable (FPT) parameterized by the number of item types $d$ with running time $2^{d^{O(d)}} \cdot {|I|}^{O(1)} = 2^{2^{O(d \log d)}} \cdot {|I|}^{O(1)}$.
    In this work, we generalize the framework of Koana and Kumabe from standard bin packing to the full Cone and Polytope Intersection Problem of Goemans and Rothvoss, directly encompassing high-multiplicity bin packing, point-in-cone, and scheduling.
    Secondly, by combining Carath\'eodory-type integer cone bounds (Eisenbrand and Shmonin, 2006) with active support enumeration, we reduce the running time to:
    $$2^{2^{O(d)}} \cdot (\text{enc}(P) + \text{enc}(Q))^{O(1)}.$$
    Under the Exponential Time Hypothesis (ETH), the double-exponential lower bound of Kowalik, Lassota, Majewski, Pilipczuk, and Soko{\l}owski (2024) for point-in-cone and Jansen, Ohnesorge, and Pirotton (2026) for high-multiplicity bin packing implies that this parameter dependence is asymptotically optimal.
    Finally, we provide an explicit decompression algorithm that extracts a solution with sparse support $|\text{supp}(\vect \lambda)| \le 2^{2d+1}$ in single-exponential FPT time.
\end{abstract}

\section{Introduction}\label{sec:introduction}

Integer cones describe how a collection of feasible local configurations can be combined into a global solution.
A generator may represent the contents of one bin, the jobs assigned to one machine, or a feasible allocation of several resources; its nonnegative integer coefficient records how often that configuration is used.
The \emph{Cone and Polytope Intersection} (CAPI) problem captures this common structure: the generators are the integer points of a bounded rational polyhedron $P = \{\vect x : A \vect x \leq \vect b\}$, and their sum must lie in a second rational polyhedron $Q = \{\vect y : B \vect y \leq \vect c\}$.
Formally, the integer cone generated by the lattice points of $P$ is
\begin{equation*}
    \operatorname{int.cone}(P \cap \mathbb{Z}^d)
    := \left\{ \sum_{\vect x \in P \cap \mathbb{Z}^d}
    \lambda_{\vect x}\vect x :
    \lambda_{\vect x} \in \mathbb{Z}_{\ge 0} \right\}.
\end{equation*}
The task is to determine whether
\begin{equation}\label{eq:generalized-integer-cone}\tag{CAPI}
    \operatorname{int.cone}(P \cap \mathbb{Z}^d) \cap Q \ne \emptyset,
\end{equation}
and, if so, to find a vector $\vect y$ in this intersection together with a representation $\vect y = \sum_{\vect x} \lambda_{\vect x}\vect x$.
This formulation encompasses the \emph{Point-in-Cone} problem in which $Q=\{\vect y\}$ for a prescribed integer vector $\vect y$ and the \emph{High-Multiplicity Bin Packing} problem in which $P$ is a knapsack polytope as well as other scheduling problems.

We write $\operatorname{enc}(P)$ and $\operatorname{enc}(Q)$ for the binary encoding lengths of their defining inequality systems, and set $|I| := \operatorname{enc}(P) + \operatorname{enc}(Q)$.
The parameter is the ambient dimension $d$; neither coefficient magnitudes nor the number of inequalities are assumed to be bounded in terms of $d$.

A central difficulty is that the generating set $P \cap \mathbb{Z}^d$ is given implicitly and may be exponentially large in the input length, even in fixed dimension.
Nevertheless, its convex structure guarantees sparse representations.
Eisenbrand and Shmonin~\cite{eisenbrand2006} proved that every vector in $\operatorname{int.cone}(P \cap \mathbb{Z}^d)$ admits a representation using at most $2^d$ distinct generators.
This bounds the number of configurations needed, but does not identify them: both the generators and their multiplicities remain unknown.
Exponential support can be necessary even for optimal bin packings, as shown by Jansen, Pirotton, and Tutas~\cite{jansen2025esa}.

Goemans and Rothvoss~\cite{goemans2020} established polynomial-time solvability of~\eqref{eq:generalized-integer-cone} for every fixed dimension.
Their algorithm finds a feasible vector and a certificate supported on at most $2^{2d+1}$ generators in time
\begin{equation*}
    \operatorname{enc}(P)^{2^{O(d)}} \cdot \operatorname{enc}(Q)^{O(1)}.
\end{equation*}
Thus, the Cone and Polytope Intersection Problem belongs to XP parameterized by $d$.
Jansen and Klein~\cite{jansenKlein2020} subsequently refined the underlying structure theorem using the vertices $V$ of the integer hull $\operatorname{conv}(P \cap \mathbb{Z}^d)$, obtaining running time $|V|^{2^{O(d)}} \cdot {|I|}^{O(1)}$.
Their result improves the dependence on the input when the integer hull has few vertices, but $|V|$ is not bounded by a function of $d$ alone.
These results therefore left open whether the dependence on dimension could be confined to a multiplicative factor, yielding an FPT algorithm with running time $f(d)\cdot {|I|}^{O(1)}$.

Recently, Koana and Kumabe~\cite{koana2026} obtained such an algorithm for high-multiplicity bin packing, with deterministic running time $2^{2^{O(d\log d)}} \cdot {|I|}^{O(1)}$, where $d$ is the number of item types and $|I|$ is the binary input length.
Their approach partitions configurations into residue classes modulo $d$ and exploits the \emph{Integer Decomposition Property} (IDP) of the convex hull of each class.
This yields an integer feasibility formulation with $(d+1)d^d$ variables, which they solve using strong separation oracles and an algorithm by Dadush~\cite{dadushPhD}.
Notably~\cite{koana2026} only solve the decision version of the problem and does not explicitly provide a solution.

In this work, we extend this approach to the full Cone and Polytope Intersection Problem and improve the parameter dependence to $2^{2^{O(d)}}$.
Our algorithm also constructs a sparse certificate within this running time.

\begin{restatable}[Cone and Polytope Intersection Problem]{theorem}{mainthm}\label{thm:main}
    Given rational polyhedra $P, Q \subseteq \mathbb{R}^d$ where $P$ is bounded, one can determine whether $\operatorname{int.cone}(P \cap \mathbb{Z}^d) \cap Q \ne \emptyset$, and if so, find a vector $\vect y \in \operatorname{int.cone}(P \cap \mathbb{Z}^d) \cap Q$ and a sparse certificate $\vect \lambda \in \mathbb{Z}_{\ge 0}^{P \cap \mathbb{Z}^d}$ such that $\vect y = \sum_{\vect x \in P \cap \mathbb{Z}^d} \lambda_{\vect x} \vect x$ with $|\operatorname{supp}(\vect \lambda)| \le 2^{2d+1}$ in deterministic time
    \begin{equation*}
        2^{2^{O(d)}} \cdot (\operatorname{enc}(P) + \operatorname{enc}(Q))^{O(1)}.
    \end{equation*}
\end{restatable}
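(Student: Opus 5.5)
The plan is to follow the Koana--Kumabe residue-class framework, with two changes: residues are taken modulo $2$ rather than modulo $d$, and the integrality step is replaced by a Carathéodory-type bound plus enumeration of the active support.

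\textbf{Step 1 (residue classes).} For each residue class $\vect r \in \{0,1\}^d$, let $P_{\vect r} := \operatorname{conv}(\{\vect x \in P \cap \mathbb{Z}^d : \vect x \equiv \vect r \pmod 2\})$. This gives $2^d$ classes instead of $d^d$, and it is the source of the improvement from $2^{2^{O(d\log d)}}$ to $2^{2^{O(d)}}$. I would use the fact underlying Eisenbrand--Shmonin's $2^d$ bound: if two generators lie in the same parity class, then their midpoint is integral and lies in $P$. One can therefore repeatedly replace $\vect x + \vect x'$ by $2\cdot\frac{\vect x + \vect x'}{2}$ while a potential function decreases. Consequently, any $\vect y \in \operatorname{int.cone}(P\cap\mathbb{Z}^d)$ can be written as $\sum_{\vect r} \mu_{\vect r} \vect z_{\vect r}$, where $\vect z_{\vect r} \in P_{\vect r}$ and each $\mu_{\vect r} \in \mathbb{Z}_{\ge 0}$ is a multiplicity.

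The difficulty is that $\mu_{\vect r}\vect z_{\vect r}$ must actually be a sum of $\mu_{\vect r}$ lattice points of class $\vect r$. This requires an IDP-type statement: $\mu P_{\vect r} \cap (\mu \vect r + 2\mathbb{Z}^d)$ should consist of sums of $\mu$ points of $P_{\vect r}\cap(\vect r+2\mathbb{Z}^d)$. This holds only for $\mu$ above some threshold, or after we fix a small number of explicit generators.

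\textbf{Step 2 (active support enumeration).} Here I would combine the Eisenbrand--Shmonin bound with enumeration. By Eisenbrand--Shmonin, an optimal representation uses at most $2^d$ generators, so at most $2^d$ classes are active. I would guess the set $S \subseteq \{0,1\}^d$ of active classes, which gives $2^{2^d}$ choices.

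For each active class I would split off a bounded number $\le 2^{O(d)}$ of individually chosen generators, treated as explicit integer variables constrained to lie in $P$ with the prescribed parity. The remaining mass is taken as a scaled copy $\mu_{\vect r}\vect z_{\vect r}$ with $\mu_{\vect r}\vect z_{\vect r} \in \mu_{\vect r}P_{\vect r}$, where $\mu_{\vect r}$ is large enough that the decomposition property for large dilates applies. To express $\mu\vect z \in \mu P_{\vect r}$ without knowing $P_{\vect r}$ explicitly, I would use the homogenization $\{(\vect w,\mu) : \vect w \in \mu P_{\vect r}\}$. For this set, a strong separation oracle exists via Lenstra-type integer optimization over $P\cap(\vect r+2\mathbb{Z}^d)$ in fixed dimension.

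The resulting problem is an integer feasibility problem in $2^{O(d)}$ variables over a convex set given by a separation oracle, intersected with $Q$. Dadush's algorithm solves it in time $(2^{O(d)})^{2^{O(d)}} \cdot |I|^{O(1)} = 2^{2^{O(d)}}|I|^{O(1)}$. This meets the target bound provided the number of variables is $2^{O(d)}$ and not $2^{O(d)}\log$-larger.

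\textbf{Step 3 (decompression).} From the solution we obtain, for each active class, a pair $(\vect w_{\vect r},\mu_{\vect r})$. I would decompose $\vect w_{\vect r}$ into $\mu_{\vect r}$ class-$\vect r$ lattice points by recursive halving, using the parity of $\mu$ and the fact that the problem is a fixed-dimension integer program at each level. After $O(\log \mu)$ levels, each solved in single-exponential time, this yields a combination of $2^{O(d)}$ points.

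A final Eisenbrand--Shmonin / Goemans--Rothvoss style sparsification then reduces the support to $\le 2^{2d+1}$: $2^d$ classes times at most $2$ points per class, obtained via the same midpoint exchange argument. This sparsification is carried out as a linear-algebraic exchange on an explicit support list, so it runs in polynomial time.

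\textbf{Main obstacle.} The hardest step is the integer decomposition statement inside a parity class. It must be strong enough that at most $2^{O(d)}$ explicit exceptional generators are needed, rather than $d^{O(d)}$ as in Koana--Kumabe. My first attempt would be to prove directly, by the midpoint argument, that every feasible solution can be normalized into the following form: in each parity class, all but at most one generator pair can be averaged into a single point. In that form the class contributes $\mu\vect z$ with $\vect z$ integral, and the aggregate constraint becomes simply $\vect z \in P \cap(\vect r + 2\mathbb{Z}^d)$ times an integer $\mu$. This bilinear term would then be linearized by guessing $\lfloor\log\mu\rfloor$-bit structure, or by treating $(\mu\vect z,\mu)$ in the homogenized cone. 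Whether this homogenization keeps the product structure $\mu\vect z$ with integral $\vect z$ is exactly where the argument could fail.
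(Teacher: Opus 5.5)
Your proposal has a genuine gap, and it is where you suspect: with residues modulo $2$ the homogenized class cones are unsound, and no threshold on $\mu$ and no finite set of split-off generators repairs this. Writing $P_{\vect r}=\vect r+2\operatorname{conv}(V_{\vect r})$, the Carath\'eodory rounding argument behind the residue method needs dilation factor $k\ge d$: fractional parts of at most $d+1$ coefficients sum to at most $d$, so the floors supply $(n-1)k$ points to be grouped into blocks of $k$. For $k=2$ the property really fails, already in dimension $4$. Let $\Delta=\operatorname{conv}\{\vect 0,\vect e_1,\vect e_2,\vect e_3,(1,1,1,3)\}\subseteq\mathbb{R}^4$, whose only lattice points are its vertices, and set $P=2\Delta$. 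The class $\vect r=\vect 0$ then consists of the five vertices of $P$, so $P_{\vect 0}=P$. The point $\vect w=(2,1,1,1)$ lies in $2P_{\vect 0}\subseteq nP_{\vect 0}$ for every $n\ge 2$. However, no lattice point of $P$ has fourth coordinate $1$: that would force $x_1,x_2,x_3\ge 1$, while the barycentric constraint gives $x_1+x_2+x_3\le 8/3$. Since all points of $P$ have fourth coordinate $\ge 0$, it follows that $\vect w\notin\operatorname{int.cone}(P\cap\mathbb{Z}^4)$. With $Q=\{\vect w\}$ your formulation accepts $(\vect w,n)$ for arbitrarily large $n$ and reports a false positive. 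Your stronger stated requirement, decomposition into points of the same parity class, already fails for the Reeve tetrahedron in $d=3$. Your fallback of one generator per parity class leaves the bilinear term $\mu_{\vect r}\vect x_{\vect r}$. Homogenizing destroys the integrality of $\vect x_{\vect r}$, as the example shows, and guessing the bits of $\mu$ introduces an input-dependent number of integer variables, which breaks the $2^{O(D\log D)}$ bound of Dadush's algorithm. Your Step~3 (``recursive halving'') relies on the same missing decomposition property.

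The missing idea is to take the modulus to be $d$, not $2$; you never need fewer classes. For $\vect r\in\{0,\dots,d-1\}^d$ one has $P_{\vect r}=\vect r+d\operatorname{conv}(V_{\vect r})$, which has IDP by the dilation lemma with $k=d$. The decompositions use arbitrary lattice points of $P_{\vect r}\subseteq P$, not only points of residue $\vect r$. The $d^d$ classes cost nothing: Eisenbrand--Shmonin leaves at most $2^d$ active classes, and $\sum_{j\le 2^d}\binom{d^d}{j}\le(2^d+1)d^{d2^d}=2^{2^{O(d)}}$. So the improvement over Koana--Kumabe comes from each enumerated subproblem having only $(d+1)|R|\le(d+1)2^d$ variables, not from shrinking the set of classes, as you assumed. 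The same modulus makes decompression explicit. From a Carath\'eodory representation $\vect z/n=\sum_{i\le d+1}\mu_i\vect x^{(i)}$, emit $\lfloor n\mu_i\rfloor$ copies of each $\vect x^{(i)}$. Then emit at most $d-1$ mixed generators $\vect r+\sum_{\ell=1}^d\vect u_{j_\ell}$; these are averages of $d$ same-residue points and are integral precisely because the modulus is $d$. Finish with one final generator. This gives at most $2d+1$ generators per class and $2^d(2d+1)\le 2^{2d+1}$ in total.
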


\Cref{thm:main} establishes an improved fixed-parameter tractability in $d$, with a polynomial exponent independent of the dimension.
The double-exponential parameter dependence is optimal under the Exponential Time Hypothesis (ETH) as lower bounds for the Point in Cone problem and bin packing problem rule out a running time of $2^{2^{o(d)}}\cdot {|I|}^{O(1)}$~\cite{kowalik2024,jansen2026}.

Algorithmically, we combine the residue decomposition of Koana and Kumabe~\cite{koana2026} with the support bound of Eisenbrand and Shmonin~\cite{eisenbrand2006}.
Although there are $d^d$ residue classes, a representation with at most $2^d$ generators uses at most $2^d$ of them.
Enumerating these active classes produces $2^{2^{O(d)}}$ candidate sets, each giving a convex integer feasibility problem in at most $(d+1)2^d$ variables.
A similar procedure was used in~\cite{DBLP:journals/siamdm/Jansen10}.
Finally, an explicit decomposition procedure recovers the generators and their multiplicities in FPT-time.
This both extends the applicability of the residue method and removes the $\log d$ factor from the second exponent of its running time.

\Cref{thm:main} yields constructive FPT algorithms for the following problems, each with ETH-tight double-exponential dependence on $d$.
Throughout, $|I|$ denotes the binary input length of the respective problem.

For the Point-in-Cone problem, the algorithm of Goemans and Rothvoss~\cite{goemans2020} gives polynomial-time solvability in every fixed dimension.
Our main theorem gives the matching constructive FPT upper bound to the lower bound given by Kowalik, Lassota, Majewski, Pilipczuk, and Soko{\l}owski~\cite{kowalik2024}.

\begin{restatable}[Point-in-Cone]{corollary}{pointinconethm}
    \label{thm:point-in-cone}
    Given a bounded rational polyhedron $P \subseteq \mathbb{R}^d$ and a target vector $\vect y \in \mathbb{Z}^d$, one can determine whether $\vect y \in \operatorname{int.cone}(P \cap \mathbb{Z}^d)$ in time $2^{2^{O(d)}} \cdot (\operatorname{enc}(P) + \operatorname{enc}(\vect y))^{O(1)}$. If so, one can construct a certificate $\vect \lambda \in \mathbb{Z}_{\ge 0}^{P \cap \mathbb{Z}^d}$ with $\vect y = \sum_{\vect x \in P \cap \mathbb{Z}^d} \lambda_{\vect x} \vect x$ and $|\operatorname{supp}(\vect \lambda)| \le 2^{2d+1}$.
\end{restatable}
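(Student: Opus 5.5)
This is a direct specialization of \Cref{thm:main}: we instantiate the second polyhedron as the single point $\vect y$ and read off the conclusion. The plan is as follows.

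First, we encode the target as the polyhedron
\[
Q := \{\vect z \in \mathbb{R}^d : \vect z \le \vect y,\ -\vect z \le -\vect y\}.
\]
This is a rational polyhedron given by $2d$ inequalities. Its encoding length is $\operatorname{enc}(Q) = O(d + \operatorname{enc}(\vect y))$, which is polynomial in $\operatorname{enc}(\vect y)$ and $d$. Since $Q = \{\vect y\}$, we have
\[
\operatorname{int.cone}(P \cap \mathbb{Z}^d) \cap Q \neq \emptyset \iff \vect y \in \operatorname{int.cone}(P \cap \mathbb{Z}^d).
\]
The polyhedron $P$ is bounded by assumption, so the hypotheses of \Cref{thm:main} are met.

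Second, we apply \Cref{thm:main} to the pair $(P, Q)$. It runs in time
\[
2^{2^{O(d)}} \cdot (\operatorname{enc}(P) + \operatorname{enc}(Q))^{O(1)} = 2^{2^{O(d)}} \cdot (\operatorname{enc}(P) + \operatorname{enc}(\vect y) + d)^{O(1)}.
\]
The additive $d$ inside the polynomial is harmless: $d^{O(1)}$ is absorbed into the factor $2^{2^{O(d)}}$, or alternatively $d \le \operatorname{enc}(\vect y)$ since $\vect y$ has $d$ entries.

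Third, we read off the answer. If the intersection is nonempty, the returned vector lies in $Q$ and is therefore exactly $\vect y$. The returned certificate $\vect\lambda$ then satisfies $\vect y = \sum_{\vect x} \lambda_{\vect x} \vect x$ with $|\operatorname{supp}(\vect\lambda)| \le 2^{2d+1}$, which is the claimed certificate. If the intersection is empty, we answer no, and this is correct by the equivalence above.

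There is no real obstacle here. The only point needing care is the encoding-length bookkeeping for the point polyhedron $Q$, which is routine.
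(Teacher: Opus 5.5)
Your proposal is correct and is the same as the paper's proof: encode $Q=\{\vect y\}$ by the $2d$ inequalities $\vect z \le \vect y$, $-\vect z \le -\vect y$, apply \Cref{thm:main}, and read off the vector (which must be $\vect y$) and its certificate with support at most $2^{2d+1}$. One small bookkeeping slip: the constraint matrix of $Q$ has $2d^2$ entries, so $\operatorname{enc}(Q) = O(d^2 + \operatorname{enc}(\vect y))$ (as the paper states) rather than $O(d + \operatorname{enc}(\vect y))$, but this is harmless for the reason you give.
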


For high-multiplicity bin packing, McCormick, Smallwood, and Spieksma~\cite{mccormick2001} obtained a polynomial-time algorithm for two item types, and Goemans and Rothvoss~\cite{goemans2020} established polynomial-time solvability for every fixed number of types.
The question of fixed-parameter tractability in the number of types remained open~\cite{mnich2018,jansen2025ciac,kowalik2024,jansen2026} until Koana and Kumabe~\cite{koana2026} gave a deterministic $2^{2^{O(d\log d)}}\cdot |I|^{O(1)}$-time algorithm.
We improve this dependence to $2^{2^{O(d)}}$ and construct a packing represented by configurations and their multiplicities.

\begin{restatable}[High-Multiplicity Bin Packing]{corollary}{binpackingthm}
    \label{thm:binpacking}
    Given a bin capacity $B \in \mathbb{Z}_{\ge 1}$, $d$ item types with sizes $\vect s = (s_1, \dots, s_d)^\top \in \mathbb{Z}_{\ge 1}^d$ and multiplicities $\vect a = (a_1, \dots, a_d)^\top \in \mathbb{Z}_{\ge 0}^d$, and a bin bound $b \in \mathbb{Z}_{\ge 0}$, one can determine whether all items can be packed into at most $b$ bins of capacity $B$ in time $2^{2^{O(d)}} \cdot |I|^{O(1)}$, where $|I| = \Theta(\log B + \sum_{i=1}^d (\log s_i + \log a_i) + \log b)$ is the total binary input length. If so, one can also construct a packing with at most $2^{2d+1}$ distinct configurations.
\end{restatable}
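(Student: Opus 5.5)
We reduce the corollary directly to \Cref{thm:main} with an explicit encoding of bin packing as a Cone and Polytope Intersection instance. To make the number of bins controllable, we lift to dimension $d+1$. The generators are configurations of a single bin. Let
\begin{equation*}
    P := \{(\vect x, t) \in \mathbb{R}^{d} \times \mathbb{R} : \vect s^\top \vect x \le B,\ \vect 0 \le \vect x \le \vect a,\ t = 1\}.
\end{equation*}
This set is bounded. Its integer points are exactly the pairs $(\vect x,1)$ where $\vect x$ is a feasible configuration, i.e.\ a vector of item counts fitting into one bin, capped componentwise by $\vect a$ so that boundedness holds. The target region is
\begin{equation*}
    Q := \{(\vect y, t) : \vect y = \vect a,\ t \le b\}.
\end{equation*}
Both systems have encoding length $O(|I|)$ up to polynomial factors. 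Applying \Cref{thm:main} in dimension $d+1$ gives running time $2^{2^{O(d+1)}} \cdot |I|^{O(1)} = 2^{2^{O(d)}} \cdot |I|^{O(1)}$.

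Next we check equivalence. Suppose $(\vect y, t) = \sum \lambda_{(\vect x,1)} (\vect x, 1) \in Q$. Then the multiset of configurations, taking each $\vect x$ with multiplicity $\lambda_{(\vect x,1)}$, covers exactly $a_i$ copies of each type $i$. The number of configurations used is $t = \sum \lambda \le b$, so it is a packing into at most $b$ bins. Conversely, any packing into at most $b$ bins yields such a $\vect \lambda$: each bin's content is an integer point of $P$, since each bin holds at most $a_i$ items of type $i$ and respects the capacity. Empty bins are the configuration $\vect 0$ and can simply be dropped, or kept, since $(\vect 0,1) \in P$.

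The certificate returned by \Cref{thm:main} has support at most $2^{2(d+1)+1}$. This exceeds the claimed bound $2^{2d+1}$ by a constant factor of $4$. That is the main obstacle: the statement asks for $2^{2d+1}$ in the original dimension $d$. To obtain it, we would avoid the lifting. One option is to use $P' = \{\vect x : \vect s^\top \vect x \le B, \vect x \ge \vect 0\}$ and $Q' = \{\vect a\}$ in dimension $d$, and control the bin count separately. Alternatively, we note that the sparsification step behind \Cref{thm:main} (Eisenbrand--Shmonin applied within the enumerated residue classes) operates on the configuration vectors themselves. The extra coordinate $t$ is affinely determined, since it is constant $1$ on $P$, so it does not increase the effective dimension of the affine hull of $P \cap \mathbb{Z}^{d+1}$.

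Concretely, we would argue as follows. Because $P$ lies in the hyperplane $t=1$, the projection $(\vect x,1)\mapsto \vect x$ is a lattice-preserving bijection onto a $d$-dimensional polytope. The support bound in \Cref{thm:main} depends only on this affine dimension, so it yields $2^{2d+1}$. If that reading of the main theorem is not directly available, we fall back on a post-processing step. Starting from any certificate, we apply the Eisenbrand--Shmonin exchange argument, which reduces the support without increasing $\sum\lambda$ or changing $\sum \lambda \vect x$, to reach the required sparsity.
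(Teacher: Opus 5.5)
Your reduction (lifting configurations to $(\vect x,1)$ with $Q=\{(\vect y,t):\vect y=\vect a,\ t\le b\}$), the equivalence check, and the running-time accounting match the paper's proof, and your way of improving the support from $2^{2d+3}$ to $2^{2d+1}$---the last coordinate is constant, so the affine dimension is $d$---is exactly the step the paper takes. This is a fact about the \emph{proof} of \Cref{thm:main}, not its statement: with $t\equiv 1$ there are only $2^d$ parity classes, so enumerating only $|R|\le 2^d$ remains complete, and the decomposition LP of \Cref{lem:pr-oracle} has rank at most $d+1$, so each class yields at most $2d+1$ generators; your fallback of iterating Eisenbrand--Shmonin midpoint exchanges should not be relied on, since a single exchange need not shrink the support and you give no bound on the number of exchanges.
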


This running time is ETH-tight in $d$: Jansen, Ohnesorge, and Pirotton~\cite{jansen2026} rule out a $2^{2^{o(d)}}\cdot |I|^{O(1)}$-time algorithm for high-multiplicity bin packing.
Binary search over the bin bound also yields a minimum-bin packing within the stated running time.

The Cone and Polytope Intersection Problem has many other applications in combinatorial optimization, including high-multiplicity scheduling, vector bin packing, and cutting stock problems (see e.g.,~\cite{goemans2020,jansen2025ciac}).
We will discuss these in more detail in the final version of this work.

\subsection*{Structure of the Paper}
In~\Cref{sec:structure}, we establish the residue decomposition and its Integer Decomposition Property.
In~\Cref{sec:algorithm}, we combine sparse support enumeration, separation oracles, and constructive reconstruction to prove~\Cref{thm:main}.

\section{Structural Properties: Residues and the IDP}\label{sec:structure}

\subsection{Residue Partitioning}

Let $S := P \cap \mathbb{Z}^d$. For each residue vector $\vect r \in \{0, \dots, d-1\}^d$, define the residue class and its convex hull:
\begin{equation*}
    S_{\vect r} := S \cap (\vect r + d\mathbb{Z}^d), \quad P_{\vect r} := \operatorname{conv}(S_{\vect r}).
\end{equation*}
Since $P$ is convex and $S_{\vect r} \subseteq P$, we have $P_{\vect r} \subseteq P$. Therefore, any integer point of $P_{\vect r}$ satisfies the linear inequalities of $P$:
\begin{equation*}
    P_{\vect r} \cap \mathbb{Z}^d \subseteq P \cap \mathbb{Z}^d = S.
\end{equation*}
Note that an integer point in $P_{\vect r}$ is a valid generator in $P \cap \mathbb{Z}^d$, even if it does not have residue $\vect r$ modulo $d$.

\subsection{The Integer Decomposition Property (IDP)}
A lattice polytope $K \subseteq \mathbb{R}^d$ has the \emph{Integer Decomposition Property} (IDP) if for every integer $n \ge 1$, every $\vect z \in nK \cap \mathbb{Z}^d$ can be written as the sum of $n$ points in $K\cap \mathbb{Z}^d$.

\begin{lemma}[IDP for Dilated Polytopes \cite{cox2011, koana2026}]\label{lem:idp-dilation}
    Let $V \subseteq \mathbb{Z}^d$ be finite and nonempty, let $k \ge d$ be an integer, and let $\vect r \in \mathbb{Z}^d$. Then the polytope $\vect r + k \operatorname{conv}(V)$ has the Integer Decomposition Property.
\end{lemma}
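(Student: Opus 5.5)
The plan is to reduce to the classical fact that for a lattice polytope $K' \subseteq \mathbb{R}^d$, the dilate $kK'$ has IDP whenever $k \ge d-1$ (Bruns--Gubeladze--Trung; Cox--Haase--Hibi--Higashitani). We then handle the translation by $\vect r$ and the case $k \ge d$ separately.

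First I set $K := \operatorname{conv}(V)$. This is a lattice polytope because $V \subseteq \mathbb{Z}^d$. Translating by the integer vector $\vect r$ is harmless in the following sense. Suppose $\vect z \in n(\vect r + kK) \cap \mathbb{Z}^d$. Then $\vect z - n\vect r \in nkK \cap \mathbb{Z}^d$. If $\vect z - n\vect r = \sum_{i=1}^n \vect w_i$ with each $\vect w_i \in kK \cap \mathbb{Z}^d$, then $\vect z = \sum_{i=1}^n (\vect r + \vect w_i)$, and each summand $\vect r + \vect w_i$ is a lattice point of $\vect r + kK$. It therefore suffices to show that $kK$ has IDP for $k \ge d$. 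We may also assume $\dim K = m \le d$ and work inside the affine lattice spanned by $K$. This is legitimate because lattice points of $nkK$ lie in the affine hull of $nkK$, so the problem reduces to dimension $m$ with $k \ge m$.

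The core step is to show that for every $n \ge 1$, each lattice point of $nkK$ is a sum of $n$ lattice points of $kK$. I would do this by induction on $n$, peeling off one point of $kK$ at a time. Triangulate $K$ into lattice simplices using vertices in $V$. A point $\vect z \in nkK$ lies in the dilate $nk\Delta$ of some simplex $\Delta = \operatorname{conv}(\vect v_0,\dots,\vect v_m)$. In barycentric coordinates, $\vect z = \sum_j \mu_j \vect v_j$ with $\mu_j \ge 0$ and $\sum_j \mu_j = nk$. Write $\mu_j = \lfloor \mu_j \rfloor + \{\mu_j\}$. Then $\vect z' := \sum_j \{\mu_j\}\vect v_j$ is a lattice point, since it equals $\vect z$ minus an integer combination of vertices. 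Its coefficients sum to an integer $t \le m$, because each fractional part is less than $1$ and there are $m+1$ of them. So $\vect z' \in t\Delta$ with $t \le m \le k$.

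Now form $\vect w := \vect z' + \sum_j \alpha_j \vect v_j$, where the $\alpha_j$ are integers with $0 \le \alpha_j \le \lfloor \mu_j \rfloor$ and $\sum_j \alpha_j = k - t$. Such $\alpha_j$ exist because $\sum_j \lfloor \mu_j \rfloor = nk - t \ge k - t$. Then $\vect w \in k\Delta \cap \mathbb{Z}^d \subseteq kK$. Moreover, $\vect z - \vect w = \sum_j (\lfloor \mu_j \rfloor - \alpha_j)\vect v_j$ has nonnegative integer coefficients summing to $(n-1)k$, so it is a lattice point of $(n-1)kK$. Induction on $n$ finishes the argument.

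The main obstacle is the bound $t \le m$. It needs some care when the fractional parts sum to exactly $m+1$, but that is impossible since each is strictly less than $1$. So $t \le m \le d \le k$ holds, and this is precisely where the hypothesis $k \ge d$ is used. The other checks are that $\vect z'$ is integral, which holds because $\vect z$ and the $\vect v_j$ are integral, and that the triangulation needs no new vertices, which is standard. With these verified, the lemma follows. This is the simplex-based argument underlying \cite{cox2011}, and it is slightly easier than the sharp $k \ge d-1$ version.
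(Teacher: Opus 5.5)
Your argument is correct and is essentially the paper's proof. The paper also writes $\vect z$ as a combination of at most $d+1$ points of $V$ with coefficients summing to $nk$, using Carath\'eodory's theorem, which is lighter than a triangulation and suffices here. It then uses that the fractional parts sum to an integer at most $d \le k$, and splits the integer parts directly into $n-1$ groups of $k$ points, with the fractional remainder forming the last summand $\vect z_1$; this is your induction unrolled. Your opening appeal to the sharp $k \ge d-1$ result and the reduction to the affine hull are never used, so you can drop them.
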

\begin{proof}
    For a proof of this lemma see also~\cite{koana2026}.
    We provide a self-contained proof for completeness.
    We first prove the claim for $\vect r = \vect 0$. Fix an integer $n \ge 1$ and let $\vect z \in nk \operatorname{conv}(V) \cap \mathbb{Z}^d$. By Carath\'eodory's theorem \cite{rockafellar1970}, there exist $m \le d+1$ points $\vect v_1, \dots, \vect v_m \in V$ and non-negative coefficients $\alpha_1, \dots, \alpha_m \ge 0$ such that:
    \begin{equation*}
        \vect z = \sum_{i=1}^m \alpha_i \vect v_i \quad \text{and} \quad \sum_{i=1}^m \alpha_i = nk.
    \end{equation*}
    Because $\alpha_i - \lfloor \alpha_i \rfloor < 1$ for all $i$, the sum of fractional parts is strictly less than $m \le d+1$, which implies:
    \begin{equation*}
        \sum_{i=1}^m \lfloor \alpha_i \rfloor \ge nk - d \ge nk - k = (n-1)k \quad (\text{since } k \ge d).
    \end{equation*}
    Hence, from the multiset containing $\lfloor \alpha_i \rfloor$ copies of each $\vect v_i$, we can select exactly $(n-1)k$ integer points. Partition these $(n-1)k$ points into $n-1$ groups of $k$ points each, and denote their sums by $\vect z_2, \dots, \vect z_n$. By construction, for each $2 \le j \le n$, the vector $\vect z_j$ is integral and $\vect z_j / k \in \operatorname{conv}(V)$, so $\vect z_j \in k \operatorname{conv}(V) \cap \mathbb{Z}^d$.

    Now define:
    \begin{equation*}
        \vect z_1 := \vect z - \sum_{j=2}^n \vect z_j.
    \end{equation*}
    The vector $\vect z_1$ is integral, and can be written as $\sum_{i=1}^m \beta_i \vect v_i$ where each coefficient satisfies $\beta_i \ge \alpha_i - \lfloor \alpha_i \rfloor \ge 0$. The sum of the coefficients is:
    \begin{equation*}
        \sum_{i=1}^m \beta_i = nk - (n-1)k = k.
    \end{equation*}
    Therefore, $\vect z_1 \in k \operatorname{conv}(V) \cap \mathbb{Z}^d$. We have thus expressed $\vect z = \sum_{j=1}^n \vect z_j$ as the sum of $n$ integer points in $k \operatorname{conv}(V) \cap \mathbb{Z}^d$.

    For general $\vect r \in \mathbb{Z}^d$, if $\vect z \in n(\vect r + k \operatorname{conv}(V)) \cap \mathbb{Z}^d$, then $\vect z - n\vect r \in nk \operatorname{conv}(V) \cap \mathbb{Z}^d$. Applying the $\vect r = \vect 0$ case yields $\vect z - n\vect r = \sum_{j=1}^n \vect z_j$ with $\vect z_j \in k \operatorname{conv}(V) \cap \mathbb{Z}^d$. Then $\vect z = \sum_{j=1}^n (\vect r + \vect z_j)$, where each $\vect r + \vect z_j \in (\vect r + k \operatorname{conv}(V)) \cap \mathbb{Z}^d$.
\end{proof}

\begin{corollary}\label{cor:pr-idp}
    For every residue vector $\vect r \in \{0, \dots, d-1\}^d$ such that $S_{\vect r} \ne \emptyset$, the polytope $P_{\vect r}$ has the Integer Decomposition Property, and $P_{\vect r} \cap \mathbb{Z}^d \subseteq S$.
\end{corollary}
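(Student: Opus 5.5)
We plan to obtain the corollary directly from \Cref{lem:idp-dilation} by writing $P_{\vect r}$ as a translate of a $d$-fold dilation of a lattice polytope. Fix $\vect r \in \{0,\dots,d-1\}^d$ with $S_{\vect r} \ne \emptyset$. Every point of $S_{\vect r}$ has the form $\vect r + d\vect v$ with $\vect v \in \mathbb{Z}^d$. We therefore set
\begin{equation*}
    V := \{ \vect v \in \mathbb{Z}^d : \vect r + d\vect v \in S_{\vect r} \} = \tfrac{1}{d}(S_{\vect r} - \vect r).
\end{equation*}
This set is finite, because $P$ is bounded and hence $S$ is finite. It is nonempty by assumption, and it lies in $\mathbb{Z}^d$ by the definition of the residue class.

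Next we show that $P_{\vect r} = \vect r + d\operatorname{conv}(V)$. The map $\vect v \mapsto \vect r + d\vect v$ is affine, and convex hulls commute with affine maps. Hence
\begin{equation*}
    \operatorname{conv}(S_{\vect r}) = \operatorname{conv}(\vect r + dV) = \vect r + d\operatorname{conv}(V).
\end{equation*}
We then apply \Cref{lem:idp-dilation} with $k = d$ (which satisfies $k \ge d$), with this $V$, and with this $\vect r$. The lemma gives that $P_{\vect r}$ has the IDP. Note that $P_{\vect r}$ is a lattice polytope, since its vertices lie in $S_{\vect r} \subseteq \mathbb{Z}^d$, so the IDP definition applies to it.

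The second assertion, $P_{\vect r} \cap \mathbb{Z}^d \subseteq S$, was already observed in the residue partitioning subsection. Since $S_{\vect r} \subseteq P$ and $P$ is convex, we have $P_{\vect r} \subseteq P$. Intersecting both sides with $\mathbb{Z}^d$ gives $P_{\vect r} \cap \mathbb{Z}^d \subseteq P \cap \mathbb{Z}^d = S$.

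No step here is a genuine obstacle. The only points that need care are the bookkeeping checks: $V$ is finite and nonempty and lies in $\mathbb{Z}^d$ (this is where the hypothesis $S_{\vect r}\neq\emptyset$ and the boundedness of $P$ are used), and the dilation factor $d$ meets the requirement $k \ge d$ of \Cref{lem:idp-dilation}. The identity $\operatorname{conv}(\vect r + dV) = \vect r + d\operatorname{conv}(V)$ follows immediately from linearity of convex combinations.
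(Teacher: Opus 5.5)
Your proposal is correct and follows the paper's proof essentially verbatim. You set $V = \tfrac{1}{d}(S_{\vect r} - \vect r)$, note $P_{\vect r} = \vect r + d\operatorname{conv}(V)$, apply \Cref{lem:idp-dilation} with $k = d$, and obtain $P_{\vect r} \cap \mathbb{Z}^d \subseteq S$ from convexity of $P$. Your extra checks (finiteness of $V$ via boundedness of $P$, and that $P_{\vect r}$ is a lattice polytope) are accurate and slightly more explicit than the paper's.
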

\begin{proof}
    Let $V_{\vect r} := (S_{\vect r} - \vect r)/d$.
    Since $S_{\vect r} \subseteq \vect r + d\mathbb{Z}^d$, $V_{\vect r} \subseteq \mathbb{Z}^d$ is finite and nonempty.
    Thus $P_{\vect r} = \vect r + d \operatorname{conv}(V_{\vect r})$.
    Choosing $k = d$ in \Cref{lem:idp-dilation} implies that $P_{\vect r}$ has IDP.
    Furthermore, convexity of $P$ ensures $P_{\vect r} \subseteq P$, so $P_{\vect r} \cap \mathbb{Z}^d \subseteq P \cap \mathbb{Z}^d = S$.
\end{proof}

\section{The Algorithm}\label{sec:algorithm}

\subsection{Carath\'eodory Bounds for Integer Cones}

In the formulation of Koana and Kumabe~\cite{koana2026}, all $d^d$ residue classes are included in a single mathematical program with $D = (d+1)d^d$ variables. Their analysis gives a running time of $2^{O(D^3)}|I|^{O(1)} = 2^{2^{O(d \log d)}}|I|^{O(1)}$. To remove the $\log d$ factor from the second exponent, we use integer cone Carath\'eodory bounds.

\begin{lemma}[Eisenbrand and Shmonin \cite{eisenbrand2006}]
    \label{lem:eisenbrand}
    Let $X \subseteq \mathbb{Z}^d$ be a finite set satisfying
    \[
        \operatorname{conv}(X) \cap \mathbb{Z}^d = X.
    \]
    Then, for every $\vect y \in \operatorname{int.cone}(X)$, there exists a subset $X' \subseteq X$ with $|X'| \le 2^d$ such that $\vect y \in \operatorname{int.cone}(X')$.
\end{lemma}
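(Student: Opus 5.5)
We follow the classical Eisenbrand--Shmonin exchange argument: pick a representation of minimum support and show that large support yields a strictly smaller one. Let $\vect y \in \operatorname{int.cone}(X)$. Among all representations $\vect y = \sum_{\vect x \in X} \lambda_{\vect x} \vect x$ with $\vect\lambda \in \mathbb{Z}_{\ge 0}^X$, choose one minimizing $|\operatorname{supp}(\vect\lambda)|$; call the support $X' = \{\vect x_1,\dots,\vect x_m\}$, so every $\lambda_{\vect x_i} \ge 1$. Assume for contradiction that $m > 2^d$.

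\textbf{Pigeonhole step.} For each subset $T \subseteq X'$ consider the integer vector $\sigma(T) := \sum_{\vect x \in T} \vect x$. We partition $\mathbb{Z}^d$ into the $2^d$ parity classes modulo $2$. Since $m > 2^d$, two distinct elements of $X'$ lie in the same parity class. Their midpoint is then integral, and it lies in $\operatorname{conv}(X) \cap \mathbb{Z}^d = X$ by the hypothesis. That alone is not enough to decrease the support, because the midpoint may not lie in $X'$. We therefore use the subset version: the $2^m > 2^{2^d}$ sets are too many for a naive count, so we apply parity to subset sums. Among the $2^m$ subsets there are disjoint nonempty $T_1, T_2 \subseteq X'$ with $\sigma(T_1) \equiv \sigma(T_2) \pmod 2$, whenever $2^m > 2^d$, which is certainly true; we remove common elements of two distinct colliding subsets to make them disjoint.

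\textbf{Exchange step.} Let $T = T_1 \cup T_2$, so $|T| \ge 2$ and $\sigma(T) = \sigma(T_1)+\sigma(T_2) \equiv 2\sigma(T_1) \equiv 0 \pmod 2$. Hence $\vect w := \sigma(T)/2$ is integral, and it is a point of $\tfrac{|T|}{2}\operatorname{conv}(T)$. Since $T_1$ and $T_2$ are not yet balanced, the cleaner route is the standard one. We subtract one copy of each element of $T$ (this is possible because all $\lambda \ge 1$), which keeps $\vect y - \sigma(T)$ in the cone of the remaining multiplicities. We then need to re-add $\sigma(T)$ using fewer distinct generators in a way that zeroes out at least one coefficient. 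Concretely, we iterate the halving: the integral point $\sigma(T)/|T|$ is not generally integral, so instead we use the reduction $\sigma(T) = 2\vect w$. We then recursively express $\vect w$ via parity collisions until we obtain a representation of $\sigma(T)$ by points of $\operatorname{conv}(T)\cap\mathbb{Z}^d \subseteq X$ that is supported on strictly fewer than $|T|$ points of $T \cup (X\setminus X')$. The bookkeeping is chosen so that the new support size is smaller.

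\textbf{Main obstacle.} The hard part is ensuring the support strictly decreases, since introducing midpoints from $X \setminus X'$ can add new generators. The cleanest fix, which I would try first, is to replace minimal support by a potential function. Following Eisenbrand--Shmonin, choose the representation minimizing $\sum_{\vect x} \lambda_{\vect x} \cdot \phi(\vect x)$ for a strictly convex $\phi$, or lexicographically minimizing the support and then a strictly convex potential. Two same-parity points $\vect x_i \ne \vect x_j$ in the support are replaced by two copies of their integral midpoint $(\vect x_i+\vect x_j)/2 \in X$. This preserves $\vect y$ and strictly decreases $\sum \lambda_{\vect x}\,\|\vect x\|^2$ by strict convexity, so the process terminates. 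At termination no two support points share a parity class, which gives $m \le 2^d$ directly. This potential-function version avoids the subset-sum complications entirely, and it is the argument I would write up.
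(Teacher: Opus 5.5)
Your final potential-function argument is correct: the quantity $\sum_{\vect x}\lambda_{\vect x}\|\vect x\|_2^2$ is a nonnegative integer, so a minimizer exists. If two distinct support points $\vect x_i\ne\vect x_j$ have the same parity, replacing one copy of each by two copies of their midpoint (which lies in $\operatorname{conv}(X)\cap\mathbb{Z}^d = X$) lowers it by $\tfrac12\|\vect x_i-\vect x_j\|_2^2\ge 2$. Hence a minimizer has at most one support point per parity class, i.e.\ at most $2^d$. The paper gives no proof of this lemma beyond citing Eisenbrand--Shmonin, but this is essentially the parity argument it invokes in the remark after the sparse-support lemma (Lemma~7 of Goemans--Rothvoss, stated there for bin packing). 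Drop the subset-sum and recursive-halving paragraphs from your write-up: as written they are not an argument, since the promised decrease in support is never established.
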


\begin{lemma}[Sparse Support Across $Q$]\label{lem:sparse-support}
    If there exists a vector $\vect y \in \operatorname{int.cone}(P \cap \mathbb{Z}^d) \cap Q$, then there exists a vector $\vect y^\star \in \operatorname{int.cone}(P \cap \mathbb{Z}^d) \cap Q$ and a certificate $\vect \lambda^\star \in \mathbb{Z}_{\ge 0}^{P \cap \mathbb{Z}^d}$ such that:
    \begin{equation*}
        \vect y^\star = \sum_{\vect x \in P \cap \mathbb{Z}^d} \lambda^\star_{\vect x} \vect x \quad \text{and} \quad |\operatorname{supp}(\vect \lambda^\star)| \le 2^d.
    \end{equation*}
\end{lemma}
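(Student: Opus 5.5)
We will apply \Cref{lem:eisenbrand} with $X := S = P \cap \mathbb{Z}^d$ and $\vect y^\star := \vect y$; no change of the target vector is needed. First we check the hypothesis of \Cref{lem:eisenbrand}. Since $P$ is a bounded rational polyhedron, $S$ is finite. Since $S \subseteq P$ and $P$ is convex, we get $\operatorname{conv}(S) \subseteq P$, and hence $\operatorname{conv}(S) \cap \mathbb{Z}^d \subseteq P \cap \mathbb{Z}^d = S$. The reverse inclusion $S \subseteq \operatorname{conv}(S) \cap \mathbb{Z}^d$ is trivial, so $\operatorname{conv}(S) \cap \mathbb{Z}^d = S$. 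If $S = \emptyset$, then $\operatorname{int.cone}(S) = \{\vect 0\}$, so $\vect y = \vect 0$ and the all-zero certificate has empty support. This case is in any event covered by the lemma, since its proof does not need $X$ to be nonempty.

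Next, by assumption $\vect y \in \operatorname{int.cone}(S) \cap Q$. \Cref{lem:eisenbrand} gives a subset $X' \subseteq S$ with $|X'| \le 2^d$ and $\vect y \in \operatorname{int.cone}(X')$. So there are nonnegative integers $\mu_{\vect x}$ for $\vect x \in X'$ with $\vect y = \sum_{\vect x \in X'} \mu_{\vect x} \vect x$.

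We then define the certificate $\vect \lambda^\star \in \mathbb{Z}_{\ge 0}^{S}$ by $\lambda^\star_{\vect x} := \mu_{\vect x}$ for $\vect x \in X'$ and $\lambda^\star_{\vect x} := 0$ otherwise. Then $\vect y^\star := \vect y = \sum_{\vect x \in S} \lambda^\star_{\vect x}\vect x$ and $\operatorname{supp}(\vect \lambda^\star) \subseteq X'$, so $|\operatorname{supp}(\vect\lambda^\star)| \le 2^d$. Moreover $\vect y^\star$ lies in $Q$ because it equals $\vect y$, and it lies in $\operatorname{int.cone}(P\cap\mathbb{Z}^d)$ because of the representation just given.

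The only point needing care is the hypothesis $\operatorname{conv}(X) \cap \mathbb{Z}^d = X$ of \Cref{lem:eisenbrand}. It holds because $X$ is the full set of lattice points of a convex set, and not an arbitrary finite subset of it. Otherwise the argument is a direct transfer, and the polyhedron $Q$ plays no role, since the vector $\vect y$ itself is kept.
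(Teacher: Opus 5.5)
Your proposal is correct and is essentially the paper's proof: apply \Cref{lem:eisenbrand} to $S = P \cap \mathbb{Z}^d$ and keep $\vect y^\star = \vect y$, so that membership in $Q$ is automatic. Your explicit verification that $\operatorname{conv}(S) \cap \mathbb{Z}^d = S$, which follows from the convexity of $P$, supplies the hypothesis of \Cref{lem:eisenbrand} that the paper's proof leaves implicit; the paper mentions only that $S$ is finite.
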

\begin{proof}
    Let $\vect y \in \operatorname{int.cone}(S) \cap Q$. Since $S = P \cap \mathbb{Z}^d$ is a finite subset of $\mathbb{Z}^d$, Lemma~\ref{lem:eisenbrand} implies that $\vect y \in \operatorname{int.cone}(X')$ for some $X' \subseteq S$ with $|X'| \le 2^d$. Setting $\vect y^\star = \vect y$ yields the certificate $\vect \lambda^\star$ with $|\operatorname{supp}(\vect \lambda^\star)| \le 2^d$, while preserving $\vect y^\star \in Q$.
\end{proof}

\begin{remark}[Parity Argument for Bin Packing]
    For high-multiplicity bin packing padded to exactly $b$ bins, a direct parity argument (Lemma 7 in \cite{goemans2020}) guarantees that among all feasible packings minimizing $\sum_{j=1}^b \|\vect x^{(j)}\|_2^2$, no two distinct configurations share the same coordinate parity modulo 2. This immediately provides an independent proof of support at most $2^d$.
\end{remark}

\subsection{Enumeration of Candidates}

By Lemma~\ref{lem:sparse-support}, at most $2^d$ distinct generators $\vect x \in S$ are needed in a feasible solution.
Each generator belongs to some residue class $\vect r = \vect x \bmod d \in \{0, \dots, d-1\}^d$. Thus, there exists a subset $R \subseteq \{0, \dots, d-1\}^d$ of active residue classes such that $|R| \le 2^d$.

\begin{lemma}[Candidate Support Bound]\label{lem:num-subsets}
    The number of candidate active residue subsets $R \subseteq \{0, \dots, d-1\}^d$ with $|R| \le 2^d$ is at most $2^{2^{O(d)}}$.
\end{lemma}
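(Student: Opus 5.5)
We bound the number of subsets of a ground set of size $N := d^d$ with at most $k := 2^d$ elements. The plan is to use the crude estimate
\begin{equation*}
    \sum_{i=0}^{k} \binom{N}{i} \le \sum_{i=0}^{k} N^i \le (k+1) N^{k},
\end{equation*}
valid since $\binom{N}{i} \le N^i$. If $k \ge N$ (small $d$), the count is simply at most $2^N$. For $d \le 4$ this is a constant, and in general $N = d^d$ still has to be compared against a $2^{O(d)}$ exponent. A cleaner route is to observe that the bound $(k+1)N^k$ holds unconditionally, so no case split is needed.

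Next, substitute the values. We get $(2^d+1)\cdot (d^d)^{2^d} = (2^d+1)\cdot 2^{2^d \cdot d\log_2 d}$. The exponent is $2^d d\log d \le 2^d \cdot d^2 \le 2^{d}\cdot 2^{2d} = 2^{3d}$ for all $d\ge 1$, using $d\log d \le d^2 \le 4^d$. The prefactor satisfies $2^d+1 \le 2^{d+1} \le 2^{2^{d+1}}$. Multiplying gives a total of at most $2^{2^{3d}+2^{d+1}} \le 2^{2^{3d+1}} = 2^{2^{O(d)}}$.

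The only subtle point is recognising that the $\log d$ factor from $N = d^d$ sits in the \emph{first} exponent as the factor $d\log d$ multiplying $2^d$. It is therefore absorbed into $2^{O(d)}$ in the second exponent, unlike the Koana--Kumabe formulation where $d^d$ itself appears in the exponent. No further obstacles are expected; the argument is a routine counting estimate.
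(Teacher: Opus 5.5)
Your proposal is correct and follows essentially the same route as the paper: bound $\sum_{j\le k}\binom{N}{j}\le (k+1)N^k$ with $N=d^d$ and $k\le 2^d$, then substitute to obtain $(2^d+1)(d^d)^{2^d}=2^{2^{O(d)}}$. The only differences are cosmetic. The paper sets $k=\min\{2^d,N\}$, whereas you drop the case split, and you make the exponent explicit as $2^{2^{3d+1}}$. One small slip in your side remark: $2^d\ge d^d$ holds only for $d\le 2$, not $d\le 4$, but this does not affect the argument.
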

\begin{proof}
    There are $N = d^d$ residue classes. Let $k = \min\{2^d,N\}$. Including the empty subset, the number of candidates is at most
    \[
        \sum_{j=0}^k \binom{N}{j}
        \le \sum_{j=0}^k N^j
        \le (k+1)N^k
        \le (2^d+1)(d^d)^{2^d}
        = 2^{2^{O(d)}}.\qedhere
    \]
\end{proof}

\subsection{Restricted Polyhedral Convex Formulation}

For each residue vector $\vect r \in \{0, \dots, d-1\}^d$, define the homogenization cone $C_{\vect r} \subseteq \mathbb{R}^{d+1}$:
\begin{equation*}
    C_{\vect r} := \{(\vect 0, 0)\} \cup \left\{ (\vect z_{\vect r}, n_{\vect r}) \in \mathbb{R}^d \times \mathbb{R}_{>0} : \frac{\vect z_{\vect r}}{n_{\vect r}} \in P_{\vect r} \right\}.
\end{equation*}

For a fixed candidate subset $R \subseteq \{0, \dots, d-1\}^d$ with $|R| \le 2^d$, let $D_R := (d+1)|R| \le (d+1)2^d$. We define the restricted convex set $K_R \subset \mathbb{R}^{D_R}$ over variables $((\vect z_{\vect r}, n_{\vect r}))_{\vect r \in R}$ by:
\begin{align}
    B \left( \sum_{\vect r \in R} \vect z_{\vect r} \right) & \le \vect c, \label{eq:poly-q}                              \\
    n_{\vect r}                                             & \ge 0 \quad (\forall \vect r \in R),              \nonumber \\
    (\vect z_{\vect r}, n_{\vect r})                        & \in C_{\vect r} \quad (\forall \vect r \in R), \nonumber
\end{align}
where $B \vect y \le \vect c$ is the linear description of $Q$.
If the problem specifies a bound on total generators (such as $n \le b$ in bin packing), we also enforce $\sum_{\vect r \in R} n_{\vect r} \le b$.

\begin{lemma}[Soundness and Completeness]\label{lem:kr-sound-complete}
    There exists $\vect y \in \operatorname{int.cone}(P \cap \mathbb{Z}^d) \cap Q$ if and only if there exists a subset $R \subseteq \{0, \dots, d-1\}^d$ with $|R| \le 2^d$ such that $K_R \cap \mathbb{Z}^{D_R} \ne \emptyset$.
\end{lemma}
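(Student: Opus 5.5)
We prove the two directions separately. The forward direction uses \Cref{lem:sparse-support} to choose $R$; the backward direction uses the IDP of \Cref{cor:pr-idp} to turn an integer point of $K_R$ back into generators.

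\emph{($\Rightarrow$).} Suppose $\vect y \in \operatorname{int.cone}(P\cap\mathbb{Z}^d)\cap Q$. By \Cref{lem:sparse-support} there is $\vect y^\star \in Q$ with a certificate $\vect\lambda^\star$ supported on a set $X' \subseteq S$ with $|X'|\le 2^d$. Let $R := \{\vect x \bmod d : \vect x \in X'\}$, so $|R| \le 2^d$. For each $\vect r\in R$ set
\[
\vect z_{\vect r} := \sum_{\vect x\in X',\, \vect x\equiv \vect r} \lambda^\star_{\vect x}\vect x, \qquad n_{\vect r} := \sum_{\vect x\in X',\,\vect x\equiv\vect r}\lambda^\star_{\vect x}.
\]
These are integers and $n_{\vect r}\ge 0$.
\begin{itemize}
\item If $n_{\vect r}=0$, then $\vect z_{\vect r}=\vect 0$, and $(\vect 0,0)\in C_{\vect r}$.
\item If $n_{\vect r}>0$, then $\vect z_{\vect r}/n_{\vect r}$ is a convex combination of points of $S_{\vect r}$, hence lies in $P_{\vect r}$.
\end{itemize}
Since $\sum_{\vect r}\vect z_{\vect r} = \vect y^\star \in Q$, constraint \eqref{eq:poly-q} holds. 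If a bound $\sum n_{\vect r}\le b$ is imposed, it holds because the total number of generators used is unchanged. Hence $K_R\cap\mathbb{Z}^{D_R}\neq\emptyset$.

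\emph{($\Leftarrow$).} Let $((\vect z_{\vect r},n_{\vect r}))_{\vect r\in R}$ be an integer point of $K_R$. Fix $\vect r\in R$.
\begin{itemize}
\item If $n_{\vect r}=0$, membership in $C_{\vect r}$ forces $\vect z_{\vect r}=\vect 0$, which contributes nothing.
\item If $n_{\vect r}\ge1$, then $\vect z_{\vect r}\in n_{\vect r}P_{\vect r}\cap\mathbb{Z}^d$. In particular $P_{\vect r}$ is nonempty, so $S_{\vect r}\neq\emptyset$, and \Cref{cor:pr-idp} applies. The IDP writes $\vect z_{\vect r}$ as a sum of $n_{\vect r}$ points of $P_{\vect r}\cap\mathbb{Z}^d\subseteq S$.
\end{itemize}
Summing these decompositions over $\vect r\in R$ gives an integer combination $\vect y=\sum_{\vect r}\vect z_{\vect r}$ of generators in $S$, i.e.\ $\vect y\in\operatorname{int.cone}(S)$. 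Constraint \eqref{eq:poly-q} gives $\vect y\in Q$, and the total number of generators used is $\sum n_{\vect r}$, which respects any bound $b$.

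The main obstacle is the case split on $n_{\vect r}$ in the backward direction. The definition of $C_{\vect r}$ must exclude points $(\vect z,0)$ with $\vect z\neq \vect 0$; it does, because $C_{\vect r}$ contains only $(\vect 0,0)$ together with points having $n_{\vect r}>0$. Beyond that, the backward direction must invoke the IDP only when $S_{\vect r}\ne\emptyset$, which follows from $n_{\vect r}\ge 1$ as noted above. The remaining steps are bookkeeping.
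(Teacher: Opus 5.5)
Your proof is correct and follows the paper's argument essentially step for step. For ($\Rightarrow$) you use \Cref{lem:sparse-support} and group the support by residue class, and for ($\Leftarrow$) you apply the IDP of \Cref{cor:pr-idp} class by class; your explicit check that $n_{\vect r}\ge 1$ forces $S_{\vect r}\neq\emptyset$ fills in a detail the paper leaves implicit.

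One small caveat concerns your aside about the optional bound $\sum_{\vect r} n_{\vect r}\le b$. \Cref{lem:sparse-support} as stated does not say the sparse certificate keeps the same total multiplicity, so your claim that the count is unchanged needs the count-preserving parity exchange behind Eisenbrand--Shmonin, which replaces $\vect x_1+\vect x_2$ by $2\cdot\frac{\vect x_1+\vect x_2}{2}$. The paper's proof does not address this bound at all.
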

\begin{proof}
    \textbf{($\implies$):} Suppose $\vect y \in \operatorname{int.cone}(S) \cap Q$. By \Cref{lem:sparse-support}, there exists a decomposition $\vect y = \sum_{j=1}^k \lambda_j \vect x^{(j)}$ with $k \le 2^d$, $\lambda_j \in \mathbb{Z}_{\ge 1}$, and $\vect x^{(j)} \in S$. Let $\vect r(j) = \vect x^{(j)} \bmod d$, and define $R := \{\vect r(j) : 1 \le j \le k\}$. Clearly $|R| \le k \le 2^d$. For each $\vect r \in R$, set:
    \begin{equation*}
        J_{\vect r} := \{j : \vect r(j) = \vect r\}, \quad n_{\vect r} := \sum_{j \in J_{\vect r}} \lambda_j, \quad \vect z_{\vect r} := \sum_{j \in J_{\vect r}} \lambda_j \vect x^{(j)}.
    \end{equation*}
    Then $n_{\vect r} \in \mathbb{Z}_{\ge 1}$ and $\vect z_{\vect r} \in \mathbb{Z}^d$. We have $\sum_{\vect r \in R} \vect z_{\vect r} = \vect y$, so $B(\sum_{\vect r \in R} \vect z_{\vect r}) \le \vect c$.
    Furthermore, $\vect z_{\vect r} / n_{\vect r} = \sum_{j \in J_{\vect r}} (\lambda_j / n_{\vect r}) \vect x^{(j)} \in \operatorname{conv}(S_{\vect r}) = P_{\vect r}$, establishing $(\vect z_{\vect r}, n_{\vect r}) \in C_{\vect r}$. Thus $((\vect z_{\vect r}, n_{\vect r}))_{\vect r \in R} \in K_R \cap \mathbb{Z}^{D_R}$.

    \medskip
    \textbf{($\impliedby$):} Suppose $((\vect z_{\vect r}, n_{\vect r}))_{\vect r \in R} \in K_R \cap \mathbb{Z}^{D_R}$. For each $\vect r \in R$:
    \begin{itemize}
        \item If $n_{\vect r} = 0$, then $(\vect z_{\vect r}, 0) \in C_{\vect r}$ implies $\vect z_{\vect r} = \vect 0$.
        \item If $n_{\vect r} > 0$, then $(\vect z_{\vect r}, n_{\vect r}) \in C_{\vect r}$ implies $\vect z_{\vect r} \in n_{\vect r} P_{\vect r} \cap \mathbb{Z}^d$. By Corollary~\ref{cor:pr-idp}, $P_{\vect r}$ has IDP, so $\vect z_{\vect r} = \sum_{j=1}^{n_{\vect r}} \vect x^{(\vect r, j)}$ for some $\vect x^{(\vect r, j)} \in P_{\vect r} \cap \mathbb{Z}^d \subseteq S = P \cap \mathbb{Z}^d$.
    \end{itemize}
    Defining $\vect y := \sum_{\vect r \in R} \vect z_{\vect r}$, we obtain $\vect y \in \operatorname{int.cone}(P \cap \mathbb{Z}^d)$. Moreover, constraint~\eqref{eq:poly-q} guarantees $B \vect y \le \vect c$, so $\vect y \in Q$.
\end{proof}

\subsection{Bounding Coordinate Magnitudes}

\begin{lemma}[Bounding Box]\label{lem:bounding-box}
    If $K_R \cap \mathbb{Z}^{D_R} \ne \emptyset$, there exists an integral point $((\vect z_{\vect r}, n_{\vect r}))_{\vect r \in R} \in K_R \cap \mathbb{Z}^{D_R}$ satisfying:
    \begin{equation*}
        \max_{\vect r \in R} \max\left( \|\vect z_{\vect r}\|_\infty, n_{\vect r} \right) \le T := 2^{2^{O(d)} \cdot (\operatorname{enc}(P) + \operatorname{enc}(Q))}.
    \end{equation*}
\end{lemma}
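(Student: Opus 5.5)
The plan is to realize $K_R$ as a rational polyhedron whose facet complexity is bounded by $2^{O(d)}\cdot(\operatorname{enc}(P)+\operatorname{enc}(Q))$, and then apply the standard fact that a nonempty integer hull of a rational polyhedron with facet complexity $\varphi$ in dimension $D$ has vertex complexity $O(D^{3}\varphi)$ (and its recession directions can be chosen with that complexity as well). This yields an integral point whose entries are bounded by $2^{\mathrm{poly}(D_R)\cdot \varphi}$.

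\textbf{Step 1: $C_{\vect r}$ is a rational polyhedral cone of controlled complexity.} Since $P$ is bounded, $P_{\vect r}=\operatorname{conv}(S_{\vect r})$ is a lattice polytope. It is the integer hull of $P\cap(\vect r + d\mathbb{Z}^d)$, i.e.\ the image under $\vect w\mapsto \vect r + d\vect w$ of the integer hull of $\{\vect w : A(\vect r+d\vect w)\le \vect b\}$. That polyhedron has facet complexity $O(\operatorname{enc}(P)+d\log d)$. By the classical integer-hull bound (Schrijver, Thm.~16.1 / Cook et al.), its vertices have encoding size $\mathrm{poly}(d)\cdot\operatorname{enc}(P)$, so $P_{\vect r}$ has facet complexity $\mathrm{poly}(d)\cdot\operatorname{enc}(P)$. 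Moreover, $C_{\vect r}$ is exactly the closed cone $\{(\vect z,n): n\ge 0,\ A_{\vect r}\vect z\le n\,\vect b_{\vect r}\}$, where $A_{\vect r}\vect x\le\vect b_{\vect r}$ describes $P_{\vect r}$. Indeed, for $n=0$ boundedness of $P_{\vect r}$ forces $\vect z=\vect 0$, matching the definition. Hence each inequality of $C_{\vect r}$ has size $\mathrm{poly}(d)\cdot\operatorname{enc}(P)$. If $S_{\vect r}=\emptyset$, then $C_{\vect r}=\{\vect 0\}$, which is trivial.

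\textbf{Step 2: $K_R$ as a polyhedron.} Combining the cones for $\vect r\in R$ with the constraints $B\sum_{\vect r}\vect z_{\vect r}\le\vect c$ (and optionally $\sum n_{\vect r}\le b$), every defining inequality of $K_R$ has encoding size at most $\varphi:=\mathrm{poly}(d)\cdot(\operatorname{enc}(P)+\operatorname{enc}(Q))$. The inequalities from $Q$ involve $d|R|$ variables, but their coefficients are repeated copies of rows of $B$, so their size grows only by a factor of $|R|\le 2^d$. This still gives $\varphi\le 2^{O(d)}|I|$.

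\textbf{Step 3: apply the integer-hull bound.} $K_R\cap\mathbb{Z}^{D_R}\neq\emptyset$ and $K_R$ is a rational polyhedron in dimension $D_R\le(d+1)2^d$. Therefore $\operatorname{conv}(K_R\cap\mathbb{Z}^{D_R})$ is a nonempty polyhedron with a minimal face containing a point of encoding size $O(D_R^{3}\varphi)$. Here I take a vertex if the integer hull is pointed; it is pointed because $K_R\subseteq\{n_{\vect r}\ge 0\}$ and each $C_{\vect r}$ is pointed, so the lineality space is trivial. An integral point near such a vertex is obtained via the standard proof, which writes the integer hull as $\operatorname{conv}$ of integral points in a box of size $2^{O(D^3\varphi)}$ plus integral cone generators. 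This gives an integral point with every coordinate bounded by $2^{O(D_R^{3}\varphi)}=2^{2^{O(d)}\cdot|I|}$, since $D_R^3=2^{O(d)}$.

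\textbf{Main obstacle.} The delicate step is Step 1: bounding the facet complexity of $P_{\vect r}$ independently of the possibly huge number of its vertices. I would handle this by citing the integer-hull vertex-complexity theorem, which bounds each vertex's size polynomially in $d$ and $\operatorname{enc}(P)$, and then converting vertex complexity to facet complexity with the standard $O(d^2)$ blow-up. The count of facets is irrelevant because only complexity enters the bound.
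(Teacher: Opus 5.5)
Your proof is correct and follows essentially the same route as the paper. Both arguments bound the facet complexity of each homogenized cone $C_{\vect r}$ via the vertex complexity of $P_{\vect r}$, and then apply Schrijver's small-integer-solution bound (Theorem~17.1 and its corollaries) in dimension $D_R = 2^{O(d)}$; the paper gets the vertex bound directly, since the vertices of $P_{\vect r}$ are integer points of $P$, and uses Hadamard's inequality rather than the integer-hull theorem. Your version is slightly more careful on two points the paper glosses: the factor $|R|$ in the size of the lifted $Q$-rows, and the $D_R^3$ in the exponent, both absorbed into $2^{O(d)}$. The pointedness detour in Step~3 is unnecessary, since Schrijver's Corollary~17.1a directly gives an integral solution of size $O(D_R^3\varphi)$.
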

\begin{proof}
    The polyhedron $K_R$ is defined in dimension $D_R \le (d+1)2^d = 2^{O(d)}$.
    The system defining $K_R$ contains the constraints of $Q$ (facet encoding length $\operatorname{enc}(Q)$) and the facets of the homogenization cones $C_{\vect r}$.
    Since $P$ is bounded, each vertex of $P_{\vect r}$ is an integer point $\vect v \in S \subset P$, whose coordinates satisfy $\|\vect v\|_\infty \le 2^{O(\operatorname{enc}(P))}$ (see~\cite[Theorem 10.3]{schrijver1986}).
    Each facet of $C_{\vect r} \subset \mathbb{R}^{d+1}$ is determined by at most $d+1$ extreme rays; by Hadamard's inequality, each facet inequality defining $C_{\vect r}$ has encoding length bounded by $O(d \cdot \operatorname{enc}(P))$.

    Thus, every inequality defining $K_R$ has encoding length $\phi \le O(d \cdot \operatorname{enc}(P) + \operatorname{enc}(Q))$. By standard polyhedral theory (Schrijver \cite{schrijver1986}, Theorem~17.1), if $K_R$ contains an integer point, it contains one bounded by:
    \begin{equation*}
        T \le 2^{O(D_R \cdot \phi)} = 2^{O(2^{O(d)} \cdot (d \cdot \operatorname{enc}(P) + \operatorname{enc}(Q)))} = 2^{2^{O(d)} \cdot (\operatorname{enc}(P) + \operatorname{enc}(Q))}.
    \end{equation*}
    Hence, $\operatorname{enc}(T) = \log_2 T = 2^{O(d)} \cdot (\operatorname{enc}(P) + \operatorname{enc}(Q))$.
\end{proof}

\subsection{Solving the ILP via Separation Oracles}
\label{sec:separation}

To determine whether a candidate active residue set $R \subseteq \{0, \dots, d-1\}^d$ with $|R| \le 2^d$ yields a feasible solution, we must decide whether the convex set $K_R$ contains an integer point.
Recall from Lemma~\ref{lem:bounding-box} that $K_R \cap \mathbb{Z}^{D_R} \ne \emptyset$ if and only if $K_R \cap [-T, T]^{D_R} \cap \mathbb{Z}^{D_R} \ne \emptyset$, where $\operatorname{enc}(T) = 2^{O(d)} \cdot (\operatorname{enc}(P) + \operatorname{enc}(Q))$ and $D_R = (d+1)|R| \le (d+1)2^d$.
We can solve this bounded integer feasibility problem in the same way as Koana and Kumabe~\cite{koana2026}, by combining a strong separation oracle for the bounded polytope $K_R \cap [-T,T]^{D_R}$ with Dadush's algorithm for convex integer programming (see~\cite[Theorem~7.1.1]{dadushPhD}).
We state the proof here and provide better bounds on the running time.

\begin{lemma}[Strong Separation and Convex Decomposition]\label{lem:pr-oracle}
    Let $P_{\vect r} = \operatorname{conv}(S_{\vect r}) \subseteq P$ be a rational polytope, and let $\vect r \in \{0, \dots, d-1\}^d$. Given a query vector $\vect v \in \mathbb{Q}^d$, one can in deterministic time
    \begin{equation*}
        2^{O(d \log d)} \cdot (\operatorname{enc}(P) + \operatorname{enc}(\vect v))^{O(1)}
    \end{equation*}
    either:
    \begin{enumerate}
        \item determine that $\vect v \notin P_{\vect r}$, and output a separating hyperplane $\vect h \in \mathbb{Q}^d, \beta \in \mathbb{Q}$ such that $\vect h^\top \vect x \le \beta$ for all $\vect x \in P_{\vect r}$ and $\vect h^\top \vect v > \beta$; or
        \item determine that $\vect v \in P_{\vect r}$, and output $m \le d+1$ points $\vect x^{(1)}, \dots, \vect x^{(m)} \in S_{\vect r} = P \cap (\vect r + d\mathbb{Z}^d)$ together with rational convex multipliers $\mu_1, \dots, \mu_m > 0$ such that $\sum_{i=1}^m \mu_i = 1$ and $\vect v = \sum_{i=1}^m \mu_i \vect x^{(i)}$.
    \end{enumerate}
\end{lemma}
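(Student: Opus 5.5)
The plan is to reduce strong separation for $P_{\vect r}$ to \emph{optimization} over $P_{\vect r}$, and to implement that optimization by integer linear programming in fixed dimension $d$. First apply the affine change of variables $\vect x = \vect r + d\vect w$ with $\vect w \in \mathbb{Z}^d$. This identifies $S_{\vect r}$ with the integer points of the polytope $P'_{\vect r} := \{\vect w : A(\vect r + d\vect w) \le \vect b\}$, whose encoding length is $O(\operatorname{enc}(P) + d\log d)$. Maximizing a rational objective $\vect h^\top \vect x$ over $S_{\vect r}$ is then an integer linear program in dimension $d$. Lenstra--Kannan type algorithms, or Dadush's algorithm, solve it in time $2^{O(d\log d)}\cdot(\operatorname{enc}(P)+\operatorname{enc}(\vect h))^{O(1)}$, and they return an optimal lattice point or report infeasibility. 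This gives an exact linear optimization oracle for $P_{\vect r} = \operatorname{conv}(S_{\vect r})$ whose outputs are points of $S_{\vect r}$. By \cite[Theorem 10.3]{schrijver1986}, these points have encoding length $O(\operatorname{enc}(P))$, so the facet complexity of $P_{\vect r}$ is bounded by $\varphi = O(d^2\operatorname{enc}(P))$.

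Second, convert optimization into separation together with a convex decomposition. Here I would use the Grötschel--Lovász--Schrijver equivalence of optimization and separation for well-described polyhedra, in its constructive form (Schrijver, Theorem 6.5.x / Corollary 14.1). Concretely, run the ellipsoid method on the polar-type problem $\max\{\vect h^\top \vect v - \beta : \vect h^\top \vect x \le \beta \ \forall \vect x \in S_{\vect r}\}$, with $\|\vect h\|_\infty \le 1$, where the constraint separation is answered by the optimization oracle. If the optimum is positive, the returned $(\vect h,\beta)$ is the separating hyperplane required in case~1. If it is not positive, then $\vect v \in P_{\vect r}$. In that case the points $\vect x^{(i)} \in S_{\vect r}$ produced by the oracle calls during the run certify membership: $\vect v$ lies in their convex hull. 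We obtain the multipliers by solving the polynomial-size LP $\vect v = \sum_i \mu_i \vect x^{(i)}$, $\sum_i\mu_i = 1$, $\mu\ge 0$, and take a basic solution, which has at most $d+1$ positive entries (Carathéodory). Alternatively, one can use the standard constructive membership routine that moves toward a vertex and recurses on a face, dropping one dimension per step; this directly produces at most $d+1$ vertices.

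Third, account for the running time. The ellipsoid method makes $\mathrm{poly}(d,\varphi,\operatorname{enc}(\vect v))$ oracle calls, each with objective encoding polynomial in these quantities. Each call costs $2^{O(d\log d)}\cdot(\operatorname{enc}(P)+\operatorname{enc}(\vect v))^{O(1)}$, which yields the stated bound. The empty case $S_{\vect r}=\emptyset$ is detected by the first ILP call, and then any $\vect v$ is separated trivially by $\vect 0^\top \vect x \le -1$.

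The main obstacle is making the second step genuinely constructive: extracting a Carathéodory decomposition into points of $S_{\vect r}$ with polynomially bounded encoding, rather than a mere yes/no membership answer. I would handle this as described, by recording all oracle outputs from the ellipsoid run and checking via the LP that $\vect v$ is in their convex hull. For this to be justified, the ellipsoid run must be exact, which requires the rounding and facet-complexity bound $\varphi$ so that the method terminates with an exact certificate.
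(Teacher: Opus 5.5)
Your proposal is correct and is essentially the paper's proof. The paper likewise separates over the dual $\max\{\vect y^\top \vect v + \theta : \vect y^\top \vect x + \theta \le 0 \ \forall \vect x \in S_{\vect r}\}$, solving the dimension-$d$ ILP obtained from $\vect x = \vect r + d\vect u$ with Kannan's algorithm. It then invokes the Gr\"otschel--Lov\'asz--Schrijver optimization and recovery results (Theorem~6.4.9 and Lemma~6.5.15), and that recovery step is exactly your ``record the oracle outputs, solve the restricted LP, take a basic solution with at most $d+1$ positive entries'' argument.

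The one difference is cosmetic. You normalize $\|\vect h\|_\infty \le 1$ so that the separating hyperplane is an optimal solution. The paper instead keeps the dual homogeneous and reads the hyperplane off the unboundedness certificate when $\vect v \notin P_{\vect r}$.
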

\begin{proof}
    Consider the primal linear system testing $\vect v \in \operatorname{conv}(S_{\vect r})$ with objective $\min 0$:
    \begin{equation}\label{eq:primal-decomp}
        \sum_{\vect x \in S_{\vect r}} \mu_{\vect x} \vect x = \vect v, \qquad \sum_{\vect x \in S_{\vect r}} \mu_{\vect x} = 1, \qquad \mu_{\vect x} \ge 0 \quad (\forall \vect x \in S_{\vect r}).
    \end{equation}
    Its dual is:
    \begin{align*}
        \max             & \{\vect y^\top \vect v + \theta\}                                             \\
        \text{s.t.}\quad & \vect y^\top \vect x + \theta \leq 0 \quad (\forall \vect x \in S_{\vect r}).
    \end{align*}
    Separating a candidate dual vector $(\vect y, \theta)$ requires maximizing the linear objective $\vect y^\top \vect x$ over $S_{\vect r} = P \cap (\vect r + d\mathbb{Z}^d)$:
    \begin{equation*}
        \max \left\{ \vect y^\top (\vect r + d\vect u) : \vect u \in \mathbb{Z}^d, \; A(\vect r + d\vect u) \le \vect b \right\}.
    \end{equation*}
    This is an integer linear program in dimension $d$.
    Thus a strong separation oracle for the dual problem is given by Kannan's algorithm~\cite[p. 434]{kannan1987} in $2^{O(d \log d)} \cdot (\operatorname{enc}(P) + \operatorname{enc}(\vect y, \theta))^{O(1)}$ time.

    The strong optimization and recovery algorithms for the dual problem above of Gr\"otschel, Lov\'asz, and Schrijver~\cite[Theorem~6.4.9 and Lemma~6.5.15]{grotschel1988} therefore use at most
    \[
        N=(d+\operatorname{enc}(P)+\operatorname{enc}(\vect v))^{O(1)}
    \]
    oracle calls.

    If the primal system~\eqref{eq:primal-decomp} is infeasible ($\vect v \notin P_{\vect r}$), the dual is unbounded (see~\cite[Corollary 7.1d]{schrijver1986}), and the ellipsoid method produces a certificate of unboundedness: an improving direction $(\vect h, \eta)$ satisfying
    \[
        \vect h^\top \vect x + \eta \le 0 \quad (\forall \vect x \in S_{\vect r}) \quad \text{and} \quad \vect h^\top \vect v + \eta > 0.
    \]
    Setting \(\beta = -\eta\), these conditions give a separation:
    \[
        \vect h^\top\vect x\le\beta
        \quad(\forall \vect x\in S_{\vect r}),
        \qquad
        \vect h^\top\vect v>\beta.
    \]
    By linearity, the first inequality holds throughout $P_{\vect r}=\operatorname{conv}(S_{\vect r})$.
    Hence it separates $\vect v$ from $P_{\vect r}$.

    If the primal system is feasible ($\vect v \in P_{\vect r}$), the dual optimum is zero.
    Applying the recovery algorithm of~\cite[Lemma~6.5.15]{grotschel1988} to the dual LP yields a basic feasible solution of the original primal system.
    Since the primal has $d+1$ equality constraints, this solution has at most $d+1$ positive coefficients.
    These coefficients and their associated points give the required convex decomposition.

    The overall running time is $N \cdot 2^{O(d \log d)} \cdot (\operatorname{enc}(P) + \operatorname{enc}(\vect v))^{O(1)} = 2^{O(d \log d)} \cdot (\operatorname{enc}(P) + \operatorname{enc}(\vect v))^{O(1)}$.
\end{proof}

\begin{lemma}[Separation and Solution of the ILP~\cite{dadushPhD}]\label{lem:sep-kr}
    Deciding whether $K_R \cap \mathbb{Z}^{D_R} \ne \emptyset$ (and if so, finding an integer point in $K_R \cap \mathbb{Z}^{D_R}$) can be solved deterministically in time
    \begin{equation*}
        2^{2^{O(d)}} \cdot (\operatorname{enc}(P) + \operatorname{enc}(Q))^{O(1)}.
    \end{equation*}
\end{lemma}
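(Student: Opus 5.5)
We would prove this by reducing the question to Dadush's convex integer programming algorithm~\cite[Theorem~7.1.1]{dadushPhD}, applied to the bounded convex body $K_R' := K_R \cap [-T,T]^{D_R}$. By \Cref{lem:bounding-box}, $K_R \cap \mathbb{Z}^{D_R} \ne \emptyset$ if and only if $K_R' \cap \mathbb{Z}^{D_R} \ne \emptyset$. Moreover, $K_R'$ lies in the ball of radius $\sqrt{D_R}\,T$ around the origin, whose encoding length is $\operatorname{enc}(T) + O(\log D_R) = 2^{O(d)}\cdot|I|$. Dadush's algorithm decides integer feasibility of a convex body given by a strong separation oracle in dimension $n$ using $2^{O(n)}$ oracle calls and $2^{O(n)}\cdot(\text{enc})^{O(1)}$ arithmetic operations (assuming the stated version allows this; otherwise we use the $n^{O(n)}$ Lenstra/Kannan-type bound). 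If feasible, it also returns an integer point. With $n = D_R \le (d+1)2^d$, this gives $2^{2^{O(d)}}$ or $(2^{O(d)})^{2^{O(d)}} = 2^{2^{O(d)}}$ calls, so either bound suffices.

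\textbf{Building the oracle.} Given a query $((\vect z_{\vect r}, n_{\vect r}))_{\vect r\in R}$, we check the explicit constraints in order. These are the box constraints, $n_{\vect r}\ge 0$, the $Q$-constraints $B\sum_{\vect r} \vect z_{\vect r} \le \vect c$, and the optional bound $\sum_{\vect r} n_{\vect r}\le b$. Each is checked directly in polynomial time, and any violated inequality is returned as the separating hyperplane. It remains to test $(\vect z_{\vect r}, n_{\vect r}) \in C_{\vect r}$ for each $\vect r\in R$, with two cases.

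If $n_{\vect r} = 0$, we need $\vect z_{\vect r} = \vect 0$. If some coordinate is nonzero, say $(\vect z_{\vect r})_i > 0$, we use the valid inequality $x_i \le M n_{\vect r}$ for $C_{\vect r}$, where $M = 2^{O(\operatorname{enc}(P))}$ bounds $\|\vect x\|_\infty$ over $P$ (Schrijver, Theorem~10.3). This inequality is violated at the query point. The case $(\vect z_{\vect r})_i < 0$ is symmetric.

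If $n_{\vect r} > 0$, we call \Cref{lem:pr-oracle} with $\vect v = \vect z_{\vect r}/n_{\vect r}$, whose encoding length is polynomial in that of the query. If it returns $\vect h,\beta$ with $\vect h^\top \vect x \le \beta$ on $P_{\vect r}$ and $\vect h^\top \vect v > \beta$, we homogenize to $\vect h^\top \vect z - \beta n \le 0$. This inequality is valid on $C_{\vect r}$ (it is trivial at the origin) and is violated by the query.

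Each oracle call therefore costs $|R|\cdot 2^{O(d\log d)}\cdot(|I| + \text{enc(query)})^{O(1)}$. The query sizes stay polynomial in $\operatorname{enc}(T)$ and $D_R$, because Dadush's algorithm only queries points of polynomially bounded encoding relative to the body description, namely the ball of radius $\sqrt{D_R}\,T$ and the facet complexity $\phi$ from \Cref{lem:bounding-box}.

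\textbf{Main obstacle: bookkeeping the running time.} The delicate point is making sure that $\operatorname{enc}(T) = 2^{O(d)}|I|$ only enters polynomially. Raising it to a power $O(1)$ gives $2^{O(d)}\cdot|I|^{O(1)}$, which is absorbed. It is also important that the oracle-call count has no hidden factor such as $\operatorname{enc}^{D_R}$. We would combine the pieces as
\[
2^{2^{O(d)}}\cdot |R|\cdot 2^{O(d\log d)}\cdot \bigl(2^{O(d)}|I|\bigr)^{O(1)} = 2^{2^{O(d)}}\cdot|I|^{O(1)}.
\]

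A secondary subtlety is that Dadush's algorithm may require a well-bounded or strong separation oracle rather than merely a weak one. Our oracle is exact, because \Cref{lem:pr-oracle} is exact and all other constraints are explicit, so this requirement is met. Finally, the returned integer point lies in $K_R$ itself, since every constraint of $K_R'$ was checked exactly.
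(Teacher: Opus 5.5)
Your proposal is correct and follows essentially the same route as the paper. It passes to $K_R \cap [-T,T]^{D_R}$ via \Cref{lem:bounding-box}, separates the explicit constraints directly and the cone constraints via \Cref{lem:pr-oracle} with the homogenized cut $\vect h^\top \vect z_{\vect r} - \beta n_{\vect r} \le 0$, and then runs Dadush's algorithm; your fallback $n^{O(n)}$ call bound is the one that actually applies (a $2^{O(n)}$ bound for convex integer programming is not known), and it already yields $2^{2^{O(d)}}$. Your explicit cut $(\vect z_{\vect r})_i \le M n_{\vect r}$ for the degenerate case $n_{\vect r} = 0$, $\vect z_{\vect r} \ne \vect 0$ is a welcome addition, since the paper only says to verify $\vect z_{\vect r} = \vect 0$ without naming the separating hyperplane.
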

\begin{proof}
    By \Cref{lem:bounding-box}, deciding $K_R \cap \mathbb{Z}^{D_R} \ne \emptyset$ is equivalent to finding an integer point in the bounded convex polyhedron $K \coloneqq K_R \cap [-T, T]^{D_R}$, where $\log_2 T = 2^{O(d)} \cdot (\operatorname{enc}(P) + \operatorname{enc}(Q))$.

    We first describe a strong separation oracle for $K$.
    Given a query point $\vect w = ((\vect z_{\vect r}, n_{\vect r}))_{\vect r \in R} \in \mathbb{Q}^{D_R}$, we check the box bounds and constraints of $Q$ in $(D_R + \operatorname{enc}(Q) + \operatorname{enc}(T) + \operatorname{enc}(\vect w))^{O(1)}$ time. Any violated inequality immediately separates $\vect w$ from $K$. Then, for each $\vect r \in R$, if $n_{\vect r} = 0$, we verify $\vect z_{\vect r} = \vect 0$.
    Otherwise $n_{\vect r} > 0$.
    We use \Cref{lem:pr-oracle} and query on $\vect v_{\vect r} \coloneqq \vect z_{\vect r} / n_{\vect r} \in \mathbb{Q}^d$.
    If $\vect v_{\vect r} \notin P_{\vect r}$, \Cref{lem:pr-oracle} outputs a separating hyperplane $\vect h^\top \vect x \le \beta$ valid for $P_{\vect r}$ with $\vect h^\top \vect v_{\vect r} > \beta$. Homogenizing this gives $\vect h^\top \vect z_{\vect r} - \beta n_{\vect r} \le 0$, which is valid for $C_{\vect r}$ (and thus for $K$), but strictly violated by $(\vect z_{\vect r}, n_{\vect r})$.

    Since $|R| \le 2^d$, checking all residue classes takes time:
    \begin{equation*}
        T_{\text{sep}} = 2^d \cdot 2^{O(d \log d)} \cdot (\operatorname{enc}(P) + \operatorname{enc}(\vect w))^{O(1)} = 2^{O(d \log d)} \cdot (\operatorname{enc}(P) + \operatorname{enc}(\vect w))^{O(1)}.
    \end{equation*}

    Using this separation oracle, we can solve the integer feasibility problem $K \cap \mathbb{Z}^{D_R} \ne \emptyset$ using Dadush's algorithm~\cite[Theorem 7.1.1]{dadushPhD}. For a convex body $K \subseteq [-T, T]^{D_R}$, it finds an integer point or certifies emptiness in $2^{O(D_R \log D_R)} \cdot (D_R + \operatorname{enc}(T))^{O(1)}$ oracle calls on points of encoding length $\operatorname{enc}(\vect w) \le (D_R + \operatorname{enc}(T))^{O(1)}$.
    Substituting $D_R \le (d+1)2^d$, we have $D_R \log D_R = 2^{O(d)}$ and $\operatorname{enc}(\vect w) \le 2^{O(d)}(\operatorname{enc}(P) + \operatorname{enc}(Q))$. Multiplying the number of calls by $T_{\text{sep}}$ yields the total running time of $2^{2^{O(d)}} \cdot (\operatorname{enc}(P) + \operatorname{enc}(Q))^{O(1)}$.
\end{proof}

While the double-exponential running time $2^{2^{O(d)}}$ is asymptotically dominated by the outer lattice search over the $O(d 2^d)$-dimensional polyhedron $K_R$ (which is ETH-tight by~\cite{kowalik2024, jansen2026}), the constant factors in the exponent can be improved by replacing the classical subroutines.
Specifically, in the inner optimization steps, Kannan's algorithm~\cite{kannan1987} can be replaced by the algorithm of Reis and Rothvoss~\cite{reis2023} running in (randomized) time $(\log d)^{O(d)} \cdot (\operatorname{enc}(P) + \operatorname{enc}(Q))^{O(1)} = 2^{O(d \log \log d)} \cdot (\operatorname{enc}(P) + \operatorname{enc}(Q))^{O(1)}$.
In the outer optimization, the Dadush algorithm can be improved by the new flatness bound in~\cite{rothvoss2026} to ${(\log D_R)}^{O(D_R)} \cdot (\operatorname{enc}(P) + \operatorname{enc}(Q))^{O(1)}$.

\subsection{Constructive Solution Reconstruction}
A crucial requirement in integer cone optimization is not only establishing feasibility (non-emptiness of $K_R \cap \mathbb{Z}^{D_R}$), but constructing an explicit certificate $\vect \lambda \in \mathbb{Z}_{\ge 0}^{P \cap \mathbb{Z}^d}$. In this setting, the scaling factors $n_{\vect r}$ can be as large as $T = 2^{2^{O(d)} \cdot (\operatorname{enc}(P) + \operatorname{enc}(Q))}$ (as bounded in \Cref{lem:bounding-box}); consequently, the conic generators cannot be enumerated one-by-one. Instead, we require a succinct representation: a collection of pairs $(\vect x_j, \lambda_j)$ where each $\vect x_j \in P \cap \mathbb{Z}^d$ is a generator (lattice point) and each conic coefficient $\lambda_j \in \mathbb{Z}_{>0}$ is an integer encoded in binary, satisfying $\sum_j \lambda_j = n_{\vect r}$ and $\sum_j \lambda_j \vect x_j = \vect z_{\vect r}$.

In this section, we present an explicit constructive decompression procedure. Given an active residue class $\vect r \in R$, integer scaling factor $n_{\vect r} \ge 1$, and target vector $\vect z_{\vect r} \in n_{\vect r} P_{\vect r} \cap \mathbb{Z}^d$, the algorithm decomposes $\vect z_{\vect r}$ into at most $2d+1$ distinct generators belonging to $P_{\vect r} \cap \mathbb{Z}^d \subseteq P \cap \mathbb{Z}^d$. Importantly, the output generators need not themselves have residue $\vect r$ modulo $d$; they only need to belong to the integer hull $P_{\vect r} \cap \mathbb{Z}^d$, which is contained in $P \cap \mathbb{Z}^d$.

\begin{lemma}[Constructive Residue Decomposition]\label{lem:construction}\phantomsection\label{alg:decomp}
    Given dimension $d \ge 1$, a residue vector $\vect r \in \{0, \dots, d-1\}^d$, integer scaling factor $n_{\vect r} \in \mathbb{Z}_{>0}$, and target vector $\vect z_{\vect r} \in \mathbb{Z}^d$ such that $\vect z_{\vect r} / n_{\vect r} \in P_{\vect r} = \operatorname{conv}(S_{\vect r})$ where $S_{\vect r} = P \cap (\vect r + d \mathbb{Z}^d)$, an integer conic decomposition
    \begin{equation*}
        \vect z_{\vect r} = \sum_{j=1}^k \lambda_j \vect x_j, \qquad \sum_{j=1}^k \lambda_j = n_{\vect r},
    \end{equation*}
    with $k \le 2d+1$ distinct generators $\vect x_j \in P_{\vect r} \cap \mathbb{Z}^d \subseteq P \cap \mathbb{Z}^d$ and positive integer coefficients $\lambda_j \in \mathbb{Z}_{>0}$ can be computed in time
    \begin{equation*}
        2^{O(d \log d)} \cdot (\operatorname{enc}(P) + \log n_{\vect r} + \operatorname{enc}(\vect z_{\vect r}))^{O(1)}.
    \end{equation*}
\end{lemma}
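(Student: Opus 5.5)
First I call \Cref{lem:pr-oracle} on the query $\vect v := \vect z_{\vect r}/n_{\vect r}$, which lies in $P_{\vect r}$ by assumption. The oracle then returns $m \le d+1$ points $\vect x^{(1)},\dots,\vect x^{(m)} \in S_{\vect r}$ and rational weights $\mu_i>0$ with $\sum_i \mu_i = 1$ and $\vect v = \sum_i \mu_i \vect x^{(i)}$. Since $\operatorname{enc}(\vect v) = O(\log n_{\vect r} + \operatorname{enc}(\vect z_{\vect r}))$, this call fits the claimed time bound. Writing $\alpha_i := n_{\vect r}\mu_i$ gives
\[
\vect z_{\vect r} = \sum_{i=1}^m \alpha_i \vect x^{(i)}, \qquad \sum_{i=1}^m \alpha_i = n_{\vect r}.
\]
I then follow the rounding scheme from the proof of \Cref{lem:idp-dilation}, but carry it out implicitly with binary-encoded multiplicities instead of listing $n_{\vect r}$ points.

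The rounding proceeds in three steps.

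\textbf{Integer parts.} Set $\lambda_i := \lfloor \alpha_i \rfloor$ and $L := \sum_i \lambda_i$. The fractional parts sum to less than $m \le d+1$, and $n_{\vect r} - L$ is an integer, so $n_{\vect r} - L \le d$.

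\textbf{Residual.} Let $\vect w := \vect z_{\vect r} - \sum_i \lambda_i \vect x^{(i)}$ and $n' := n_{\vect r} - L \in \{0,\dots,d\}$. If $n' = 0$ then $\vect w = \vect 0$ and I am done. If $n'>0$, then $\vect w = \sum_i (\alpha_i - \lambda_i)\vect x^{(i)}$ is integral and its coefficient sum is $n'$, so $\vect w \in n' P_{\vect r} \cap \mathbb{Z}^d$.

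\textbf{Rebalancing.} The residual has only $n' \le d$ copies, while the dilation argument of \Cref{lem:idp-dilation} needs multiples of $d$. I therefore borrow: if $L \ge d - n'$, I move $d - n'$ unit copies from the integer parts (taken from any $\vect x^{(i)}$ with $\lambda_i \ge 1$) into the residual. The new residual $\vect w'$ lies in $dP_{\vect r} \cap \mathbb{Z}^d$ and is a sum of exactly $d$ generators. If instead $n_{\vect r} < d$, there is nothing to borrow and the residual is all of $\vect z_{\vect r}$ with $n_{\vect r} \le d$ copies.

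It remains to decompose the residual $\vect w'$, which needs at most $d$ generators in total. In $\vect w' - d\vect r \in d^2\operatorname{conv}(V_{\vect r})$, i.e.\ after translating by $d\vect r$ and scaling by $1/d$, we are back in the setting of \Cref{lem:idp-dilation} with $n = d$ copies of $k = d$. I apply the constructive procedure of that proof recursively: one more Carathéodory decomposition via \Cref{lem:pr-oracle}, floors, and grouping into blocks. Each block sum divided by $d$ is an explicit integer point of $P_{\vect r}$. There are at most $d$ such points, each found in polynomially many arithmetic steps on numbers of polynomial size. Generators are only guaranteed to lie in $P_{\vect r}\cap\mathbb{Z}^d$, not in $S_{\vect r}$, which matches the statement.

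The hard part is the support count $k \le 2d+1$ rather than $(d+1)+d$ plus extra points from repeated rebalancing. The $d+1$ points of $S_{\vect r}$ from the first step, together with at most $d$ residual generators, give at most $2d+1$ distinct generators, after merging duplicates. This bound holds provided the borrowing only reduces multiplicities of existing points and the residual is resolved in one IDP round into at most $d$ points.

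I expect the main obstacle to be the IDP round for $\vect w'$: that proof itself outputs $\lfloor\cdot\rfloor$-groups plus a leftover $\vect z_1$, and with $n = d$ it may again leave a remainder. If a second round is needed, I would try to use a better base case. Since $\vect w' \in dP_{\vect r}$ is a sum of $d$ points with $\vect w' \equiv d\vect r \equiv \vect 0 \pmod d$, I would write $\vect w'/d$ directly as one lattice point of $P_{\vect r}$ repeated $d$ times. That vector is integral, lies in $P_{\vect r}$, and has multiplicity $d$, so it contributes a single generator. If this works, the count even becomes $d+2$. The case $n_{\vect r}<d$ would be handled the same way, by running the IDP construction on the small residual directly.

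The running time is dominated by the $O(1)$ calls to \Cref{lem:pr-oracle}, giving $2^{O(d\log d)}\cdot(\operatorname{enc}(P)+\log n_{\vect r}+\operatorname{enc}(\vect z_{\vect r}))^{O(1)}$.
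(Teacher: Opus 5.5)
Your main line is sound and is essentially the paper's construction. Your floors $\lfloor n_{\vect r}\mu_i\rfloor$ are the paper's pure-generator multiplicities $q_i$, your $n'$ is its $t\le d$, and the residual is split by averaging groups of $d$ points of $S_{\vect r}$, giving at most $(d+1)+d$ generators. What goes wrong is the ``main obstacle'' paragraph.

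The obstacle does not exist. In the proof of \Cref{lem:idp-dilation}, the leftover $\vect z_1=\vect z-\sum_{j\ge 2}\vect z_j$ is itself the $n$-th point: it is integral (a difference of integer vectors) and has nonnegative coefficients summing to exactly $k$. So one round always returns exactly $n$ points, and your proviso holds. This is precisely the paper's $\vect x_{\text{final}}$ step.

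The fallback you propose is false. $\vect w'$ is only a fractional combination of points of $S_{\vect r}$, and the $d$ summands supplied by IDP lie in $P_{\vect r}\cap\mathbb{Z}^d$, which need not be congruent to $\vect r$ modulo $d$. Hence $\vect w'\equiv\vect 0\pmod d$ fails in general. Take $d=2$, $P=[0,2]^2$, $\vect r=\vect 0$, $n_{\vect r}=2$, $\vect z_{\vect r}=(1,0)$. The only convex representation over $S_{\vect r}$ is $\vect z_{\vect r}/2=\tfrac34(0,0)+\tfrac14(2,0)$, so $n'=1$ and your rebalancing yields $\vect w'=(1,0)$. Then $\vect w'/d=(\tfrac12,0)\notin\mathbb{Z}^2$. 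That paragraph, and the claimed $d+2$ bound, should be dropped.

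Your borrowing step and second oracle call are also unnecessary; they rest on a misreading of \Cref{lem:idp-dilation}. That lemma holds for every $n\ge 1$, and its hypothesis $k\ge d$ concerns the dilation factor, which is $d$ for $P_{\vect r}=\vect r+d\operatorname{conv}(V_{\vect r})$. Write $\vect x^{(i)}=\vect r+d\vect u_i$. The residual from your first oracle call already satisfies $\vect w-n'\vect r=\sum_i d(\alpha_i-\lfloor\alpha_i\rfloor)\vect u_i\in n'd\operatorname{conv}(V_{\vect r})$, with at most $d+1$ terms. So the grouping argument applies directly with $n=n'$ and $k=d$, producing $n'-1$ mixed generators and one final generator. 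This is exactly the paper's pool argument, and it also absorbs your separate case $n_{\vect r}<d$.
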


\begin{proof}
    To keep notation uncluttered, we suppress the subscript $\vect r$ on $\vect z_{\vect r}$ and $n_{\vect r}$, writing $\vect z \in \mathbb{Z}^d$ and $n \in \mathbb{Z}_{>0}$ with $\vect z/n \in P_{\vect r}$. We first show that a small convex representation of $\vect z/n$ over $S_{\vect r}$ can be computed within the claimed time bound.

    Applying \Cref{lem:pr-oracle} to $\vect v = \vect z/n \in P_{\vect r}$ directly yields $m \le d+1$ points $\vect x^{(1)}, \dots, \vect x^{(m)} \in S_{\vect r} = P \cap (\vect r + d\mathbb{Z}^d)$ and rational convex multipliers $\mu_1, \dots, \mu_m > 0$ such that
    \begin{equation}\label{eq:caratheodory}
        \sum_{i=1}^m \mu_i = 1 \quad \text{and} \quad \frac{\vect z}{n} = \sum_{i=1}^m \mu_i \vect x^{(i)},
    \end{equation}
    in time $2^{O(d \log d)} \cdot (\operatorname{enc}(P) + \log n + \operatorname{enc}(\vect z))^{O(1)}$.

    Because every point $\vect x^{(i)}$ lies in $S_{\vect r} = P \cap (\vect r + d\mathbb{Z}^d)$, all $\vect x^{(i)}$ share the identical residue vector $\vect r \pmod d$. We can therefore write each point uniquely as
    \begin{equation}\label{eq:ui-def}
        \vect x^{(i)} = \vect r + d \vect u_i
    \end{equation}
    for unique computable integer vectors $\vect u_i \in \mathbb{Z}^d$.
    The key insight is that any group of $d$ such points $\vect u_i$ shifted by $\vect r$ produces a valid generator $\vect r + \sum_{\ell=1}^d \vect u_{j_\ell} \in P_{\vect r} \cap \mathbb{Z}^d$; we will show that we can combine points from an integer pool of these vectors $\vect u_i$ to construct valid generators.
    Scaling the convex multipliers by $nd$, we define the non-negative rational coefficients
    \begin{equation*}
        \alpha_i := nd\mu_i \ge 0 \quad (i = 1, \dots, m).
    \end{equation*}
    Substituting \eqref{eq:ui-def} into \eqref{eq:caratheodory} yields the two fundamental identities:
    \begin{equation*}
        \sum_{i=1}^m \alpha_i = nd \sum_{i=1}^m \mu_i = nd, \qquad \sum_{i=1}^m \alpha_i \vect u_i = \sum_{i=1}^m n \mu_i (d \vect u_i) = \sum_{i=1}^m n \mu_i (\vect x^{(i)} - \vect r) = \vect z - n\vect r.
    \end{equation*}

    We now construct the decomposition of $\vect z$ in three steps.
    First, we take \emph{pure} generators (containing only one $\vect u_i$) from the integer pool.
    Second, we take \emph{mixed} generators (containing up to $d$ distinct $\vect u_i$) from the integer pool.
    Finally, we take one last generator containing the remaining points in the integer pool.
    The following claims establish that this procedure is always feasible and produces at most $2d+1$ generators.

    For the pure generators, we define the integer floors and quotients:
    \begin{equation*}
        k_i := \lfloor \alpha_i \rfloor, \qquad q_i := \left\lfloor \frac{k_i}{d} \right\rfloor, \qquad q_{\text{total}} := \sum_{i=1}^m q_i.
    \end{equation*}
    A key arithmetic identity connects $q_i$ directly to the original multipliers:
    \begin{equation}\label{eq:floor-identity}
        q_i = \left\lfloor \frac{\lfloor nd\mu_i \rfloor}{d} \right\rfloor = \lfloor n\mu_i \rfloor.
    \end{equation}
    For each $i \in \{1, \dots, m\}$ with $q_i > 0$, we emit the pure generator $\vect x^{(i)}$ with coefficient $\lambda_i := q_i$. Since each $\vect x^{(i)} \in S_{\vect r} \subseteq P \cap \mathbb{Z}^d$, these pure generators are valid.

    Let $t := n - q_{\text{total}}$ denote the remaining sum of coefficients that still need to be assigned to reach the target sum $n$.
    Using identity \eqref{eq:floor-identity}, we express $t$ in terms of fractional parts:
    \begin{equation}\label{eq:t-formula}
        t = n - \sum_{i=1}^m \lfloor n\mu_i \rfloor = \sum_{i=1}^m \left( n\mu_i - \lfloor n\mu_i \rfloor \right).
    \end{equation}
    Since each fractional part satisfies $0 \le n\mu_i - \lfloor n\mu_i \rfloor < 1$ and $m \le d+1$, we have $0 \le t < m \le d+1$. Because $t$ is an integer, it satisfies:
    \begin{equation}\label{eq:t-bounds}
        0 \le t \le d.
    \end{equation}
    First, if $t = 0$, the pure generators already form a complete decomposition of $\vect z$ into $m \le d+1 \le 2d+1$ generators.
    Indeed, if $t = 0$, every non-negative term in \eqref{eq:t-formula} must vanish, implying $n\mu_i - \lfloor n\mu_i \rfloor = 0$, so $q_i = n\mu_i \in \mathbb{Z}_{\ge 0}$ for all $i$. Then:
    \begin{equation*}
        \sum_{i=1}^m q_i = n \sum_{i=1}^m \mu_i = n \quad \text{and} \quad \sum_{i=1}^m q_i \vect x^{(i)} = n \sum_{i=1}^m \mu_i \vect x^{(i)} = \vect z.
    \end{equation*}
    Thus, the pure generators already form the complete decomposition into $m \le d+1 \le 2d+1$ generators, and the construction terminates here.
    Thus, we may assume that $t \ge 1$.
    Define
    \begin{equation*}
        p := t - 1.
    \end{equation*}
    From \eqref{eq:t-bounds}, $1 \le t \le d$, which guarantees:
    \begin{equation*}
        0 \le p \le d - 1.
    \end{equation*}
    The remaining available points form an integer pool containing $\rho_i := k_i - dq_i \in \{0, \dots, d-1\}$ copies of $\vect u_i$ for each $i \in \{1, \dots, m\}$.
    In order to construct the remaining $t = p + 1$ generators, we will first show a bound on the number of points \(\vect u_i\) remaining in the integer pool.

    \begin{claim}[Sufficiency and Bounded Size of the Integer Pool]\label{claim:pool}
        The number of points remaining in the integer pool $\sum_{i=1}^m \rho_i$ satisfies
        \[
            pd \leq \sum_{i=1}^m \rho_i \leq d^2 - 1.
        \]
    \end{claim}

    \begin{claimproof}
        Define the fractional defect:
        \begin{equation*}
            F := \sum_{i=1}^m (\alpha_i - \lfloor \alpha_i \rfloor).
        \end{equation*}
        Because $\sum_{i=1}^m \alpha_i = nd$ is an integer, $F = nd - \sum_{i=1}^m k_i$ is an integer. Furthermore, each summand lies in $[0, 1)$ and $m \le d+1$, so $0 \le F < m \le d+1$, which implies $F \le d$.

        Since the pure generators used $\sum q_i = q_{\text{total}}$ points, the total number of points available in the pool is:
        \begin{align*}
            \sum_{i=1}^m \rho_i & = \sum_{i=1}^m k_i - d q_{\text{total}} \nonumber                \\
                                & = (nd - F) - d q_{\text{total}} \nonumber                        \\
                                & = d(n - q_{\text{total}}) - F \nonumber                          \\
                                & = dt - F \ge dt - d = d(t - 1) = pd. \label{eq:pool-lower-bound}
        \end{align*}
        Furthermore, each $\rho_i \le d-1$, so:
        \begin{equation*}
            \sum_{i=1}^m \rho_i \le m(d-1) \le (d+1)(d-1) = d^2 - 1. \qedhere
        \end{equation*}
    \end{claimproof}

    With the bounds from \Cref{claim:pool}, we can construct $p$ mixed generators, each containing $d$ points from the integer pool.
    To this end, iteratively take any $d$ points $\vect u_{j_1}, \dots, \vect u_{j_d}$ from the integer pool, remove them from the pool, and emit a mixed generator defined as:
    \begin{equation*}
        \vect x_{\text{mix}} := \vect r + \sum_{\ell=1}^d \vect u_{j_\ell}.
    \end{equation*}
    We need to show that each mixed generator is integral and valid (i.e., $\vect x_{\text{mix}} \in P \cap \mathbb{Z}^d$).
    The residue vector $\vect r \in \mathbb{Z}^d$ and each $\vect u_{j_\ell} \in \mathbb{Z}^d$ are integer vectors, so $\vect x_{\text{mix}} \in \mathbb{Z}^d$.
    Recalling from \eqref{eq:ui-def} that $\vect x^{(j_\ell)} = \vect r + d \vect u_{j_\ell}$, we can rewrite $\vect x_{\text{mix}}$ as:
    \begin{equation*}
        \vect x_{\text{mix}} = \vect r + \sum_{\ell=1}^d \vect u_{j_\ell} = \frac{1}{d} \sum_{\ell=1}^d (\vect r + d \vect u_{j_\ell}) = \frac{1}{d} \sum_{\ell=1}^d \vect x^{(j_\ell)}.
    \end{equation*}
    Thus, $\vect x_{\text{mix}}$ is the exact uniform average of $d$ points in $S_{\vect r} \subseteq P_{\vect r}$.
    By convexity of $P_{\vect r}$, $\vect x_{\text{mix}} \in P_{\vect r} \cap \mathbb{Z}^d \subseteq P \cap \mathbb{Z}^d$.

    It now remains to show that the final generator $\vect x_{\text{final}}$ is integral and valid, and that the total number of generators is at most $2d+1$.
    To this end, denote by $h_i \in \{0, \dots, \rho_i\}$ the total number of copies of $\vect u_i$ consumed across all $p$ mixed generators.
    Since each of the $p$ mixed generators consumes exactly $d$ points, we have $\sum_{i=1}^m h_i = pd$.
    Furthermore, because each mixed generator is defined as $\vect x_{\text{mix}} = \vect r + \sum_{\ell=1}^d \vect u_{j_\ell}$, we have $\vect x_{\text{mix}} - \vect r = \sum_{\ell=1}^d \vect u_{j_\ell}$. Summing over all $p$ mixed generators simply pools all consumed points from the integer pool, yielding the identity:
    \begin{equation}\label{eq:mixed-sum-identity}
        \sum_{\text{mixed}} (\vect x_{\text{mix}} - \vect r) = \sum_{i=1}^m h_i \vect u_i.
    \end{equation}
    Then, the remaining coefficient for each $\vect u_i$ is:
    \begin{equation*}
        \beta_i := \alpha_i - d q_i - h_i \quad (i = 1, \dots, m).
    \end{equation*}
    The coefficients $\beta_i$ are non-negative by the definition of $\rho_i$: $\beta_i \ge \alpha_i - dq_i - \rho_i = \alpha_i - k_i \ge 0$.
    Constructing the final generator as
    \begin{equation}\label{eq:cfinal-def}
        \vect x_{\text{final}} = \sum_{i=1}^m \frac{\beta_i}{d} \vect x^{(i)},
    \end{equation}
    we need to prove that $\vect x_{\text{final}} \in P \cap \mathbb{Z}^d$ and that the total number of generators is at most $2d+1$.

    First, note that the sum of the remaining coefficients is precisely $d$:
    \begin{equation}\label{eq:beta-sum}
        \sum_{i=1}^m \beta_i = \sum_{i=1}^m \alpha_i - d \sum_{i=1}^m q_i - \sum_{i=1}^m h_i = nd - dq_{\text{total}} - pd = d(n - q_{\text{total}} - p) = d,
    \end{equation}
    because $p = t - 1 = n - q_{\text{total}} - 1$, so $n - q_{\text{total}} - p = 1$.
    Dividing \eqref{eq:beta-sum} by $d$ shows that $\sum_{i=1}^m \frac{\beta_i}{d} = 1$.
    Because $\frac{\beta_i}{d} \ge 0$ for all $i$, \eqref{eq:cfinal-def} expresses $\vect x_{\text{final}}$ as an explicit convex combination of points in $S_{\vect r} \subseteq P_{\vect r}$.
    By convexity of $P_{\vect r}$, we immediately have $\vect x_{\text{final}} \in P_{\vect r} \subseteq P$.

    To see that $\vect x_{\text{final}}$ is integral and that the generators sum to $\vect z$, we expand \eqref{eq:cfinal-def} using $\vect x^{(i)} = \vect r + d \vect u_i$:
    \begin{equation*}
        \vect x_{\text{final}} = \sum_{i=1}^m \frac{\beta_i}{d} (\vect r + d \vect u_i) = \left( \sum_{i=1}^m \frac{\beta_i}{d} \right) \vect r + \sum_{i=1}^m \beta_i \vect u_i = \vect r + \sum_{i=1}^m \beta_i \vect u_i.
    \end{equation*}
    Substituting $\beta_i = \alpha_i - d q_i - h_i$ and using the identities $\sum_{i=1}^m \alpha_i \vect u_i = \vect z - n\vect r$, $d\vect u_i = \vect x^{(i)} - \vect r$, and \eqref{eq:mixed-sum-identity}, we obtain:
    \begin{align*}
        \vect x_{\text{final}} & = \vect r + \sum_{i=1}^m (\alpha_i - dq_i - h_i) \vect u_i \nonumber                                                                                  \\
                               & = \vect r + (\vect z - n\vect r) - \sum_{i=1}^m q_i (\vect x^{(i)} - \vect r) - \sum_{\text{mixed}} (\vect x_{\text{mix}} - \vect r) \nonumber        \\
                               & = \vect z - \sum_{i=1}^m q_i \vect x^{(i)} - \sum_{\text{mixed}} \vect x_{\text{mix}} + \left( 1 - n + \sum_{i=1}^m q_i + p \right) \vect r \nonumber \\
                               & = \vect z - \sum_{i=1}^m q_i \vect x^{(i)} - \sum_{\text{mixed}} \vect x_{\text{mix}},
    \end{align*}
    where the coefficient of $\vect r$ vanishes because $1 - n + q_{\text{total}} + p = 1 - n + q_{\text{total}} + (n - q_{\text{total}} - 1) = 0$.
    Since $\vect z \in \mathbb{Z}^d$ and all pure and mixed generators are integer vectors, $\vect x_{\text{final}}$ is the difference of integer vectors, so $\vect x_{\text{final}} \in \mathbb{Z}^d$.
    Together with convexity, this proves that $\vect x_{\text{final}} \in P_{\vect r} \cap \mathbb{Z}^d \subseteq P \cap \mathbb{Z}^d$.

    With this, we have completed the proof of \Cref{lem:construction}.
    Indeed, rearranging the equation for $\vect x_{\text{final}}$, the generators satisfy
    \begin{equation*}
        \sum_{i=1}^m q_i \vect x^{(i)} + \sum_{\text{mixed}} \vect x_{\text{mix}} + \vect x_{\text{final}} = \vect z.
    \end{equation*}
    Further, the total sum of emitted coefficients is
    \begin{equation*}
        q_{\text{total}} + p \cdot 1 + 1 = q_{\text{total}} + (t - 1) + 1 = q_{\text{total}} + t = q_{\text{total}} + (n - q_{\text{total}}) = n.
    \end{equation*}
    Lastly, the number of distinct generators is at most $m + p + 1 \le (d+1) + (d-1) + 1 = 2d + 1$.

    Lastly, the running time of this procedure is dominated by the time to compute the convex decomposition in \Cref{lem:pr-oracle}, which is $2^{O(d \log d)} \cdot (\operatorname{enc}(P) + \log n + \operatorname{enc}(\vect z))^{O(1)}$ bit operations.
\end{proof}

With this, we can prove our main complexity theorem.
\mainthm*

\begin{proof}
    The correctness follows directly from \Cref{lem:kr-sound-complete,lem:bounding-box,lem:construction}.

    By \Cref{lem:construction}, each active residue class $\vect r \in R$ emits at most $2d+1$ distinct generators $\vect x \in P \cap \mathbb{Z}^d$. Since $|R| \le 2^d$, the total support of $\vect \lambda$ is:
    \begin{equation*}
        |\operatorname{supp}(\vect \lambda)| \le |R| \cdot (2d+1) \le 2^d (2d+1) \le 2^{2d+1} \quad (\forall d \ge 1).
    \end{equation*}
    By \Cref{lem:sep-kr}, solving one subproblem, including all separation calls, takes time $2^{2^{O(d)}}\cdot (\operatorname{enc}(P) + \operatorname{enc}(Q))^{O(1)}$.
    There are at most $2^{2^{O(d)}}$ candidates by \Cref{lem:num-subsets}, so the total search time is
    \[
        2^{2^{O(d)}}\cdot 2^{2^{O(d)}}\cdot (\operatorname{enc}(P) + \operatorname{enc}(Q))^{O(1)}
        =2^{2^{O(d)}}\cdot (\operatorname{enc}(P) + \operatorname{enc}(Q))^{O(1)}.
    \]
    If there is a feasible solution $(\vect z_{\vect r},n_{\vect r})$, then $\|\vect z_{\vect r}\|_\infty,n_{\vect r}\le T$.
    Thus each pair $(\vect z_{\vect r},n_{\vect r})$ has encoding length $O(d(1+\log T))=2^{O(d)}\cdot (\operatorname{enc}(P) + \operatorname{enc}(Q))^{O(1)}$.
    Applying \Cref{lem:construction} to at most $2^d$ residue classes and merging their outputs takes a total running time for the construction of
    \[
        2^{O(d\log d)} \cdot (\operatorname{enc}(P) + \operatorname{enc}(Q))^{O(1)}.
    \]
    This is bounded by the search time and establishes the claimed total running time.
\end{proof}

We now show that \Cref{thm:main} directly implies \Cref{thm:point-in-cone,thm:binpacking} via known polynomial-time reductions.
In each case, an instance of the problem with binary input length $|I|$ is mapped to a pair of polyhedra $(P, Q)$ in dimension $d$ or $d+1$ satisfying $\operatorname{enc}(P) + \operatorname{enc}(Q) = |I|^{O(1)}$, and a sparse solution to the Cone and Polytope Intersection Problem immediately yields the desired solution.
We provide the proofs for completeness, but the reductions are well-known in the literature~\cite{goemans2020,kowalik2024}.

\pointinconethm*

\begin{proof}[Proof of \Cref{thm:point-in-cone}]
    Let $P \subseteq \mathbb{R}^d$ be a bounded rational polyhedron and $\vect y \in \mathbb{Z}^d$ a target vector. We define the target polyhedron $Q \subseteq \mathbb{R}^d$ as the singleton:
    \begin{equation*}
        Q := \{\vect y\} = \left\{ \vect z \in \mathbb{R}^d : \vect z \le \vect y, \; -\vect z \le -\vect y \right\}.
    \end{equation*}
    The ambient dimension is $d$, and the explicit inequality system for $Q$ has encoding length $O(d^2+\operatorname{enc}(\vect y))$.
    Thus $\operatorname{enc}(P)+\operatorname{enc}(Q)=(\operatorname{enc}(P)+\operatorname{enc}(\vect y))^{O(1)}$.
    By definition, $\operatorname{int.cone}(P \cap \mathbb{Z}^d) \cap Q \ne \emptyset$ if and only if $\vect y \in \operatorname{int.cone}(P \cap \mathbb{Z}^d)$. Applying the algorithm of Theorem~\ref{thm:main} determines whether $\vect y \in \operatorname{int.cone}(P \cap \mathbb{Z}^d)$ and, if so, outputs a certificate $\vect \lambda \in \mathbb{Z}_{\ge 0}^{P \cap \mathbb{Z}^d}$ such that $\sum_{\vect x \in P \cap \mathbb{Z}^d} \lambda_{\vect x} \vect x = \vect y$ with $|\operatorname{supp}(\vect \lambda)| \le 2^{2d+1}$ in deterministic time $2^{2^{O(d)}} \cdot (\operatorname{enc}(P) + \operatorname{enc}(\vect y))^{O(1)}$.
\end{proof}

\binpackingthm*

\begin{proof}[Proof of \Cref{thm:binpacking}]
    Let an instance of high-multiplicity bin packing be given by bin capacity $B \in \mathbb{Z}_{\ge 1}$, item sizes $\vect s = (s_1, \dots, s_d)^\top \in \mathbb{Z}_{\ge 1}^d$, item multiplicities $\vect a = (a_1, \dots, a_d)^\top \in \mathbb{Z}_{\ge 0}^d$, and bin bound $b \in \mathbb{Z}_{\ge 0}$.
    A valid bin configuration is an integer vector $\vect x \in \mathbb{Z}_{\ge 0}^d$ satisfying $\vect s^\top \vect x \le B$.
    We define the bounded polytope $P_{\text{BP}} \subseteq \mathbb{R}^{d+1}$ and target polyhedron $Q_{\text{BP}} \subseteq \mathbb{R}^{d+1}$ by:
    \begin{align*}
        P_{\text{BP}} & := \left\{ (\vect x, 1) \in \mathbb{R}^{d+1}_{\ge 0} : \vect s^\top \vect x \le B \right\},  \\
        Q_{\text{BP}} & := \left\{ (\vect y, n) \in \mathbb{R}^{d+1} : \vect y = \vect a, \; 0 \le n \le b \right\}.
    \end{align*}
    Since each item size satisfies $s_i \ge 1$, we have $0 \le x_i \le B/s_i \le B$ for each coordinate $i \in \{1, \dots, d\}$, so $P_{\text{BP}}$ is bounded. The ambient dimension is $d+1$, and the total encoding length is $\operatorname{enc}(P_{\text{BP}}) + \operatorname{enc}(Q_{\text{BP}}) = |I|^{O(1)}$.

    Suppose there exists an integer point $(\vect y, n) \in \operatorname{int.cone}(P_{\text{BP}} \cap \mathbb{Z}^{d+1}) \cap Q_{\text{BP}}$ with certificate $\vect \lambda$. Every lattice point in $P_{\text{BP}}$ has the form $(\vect x^{(j)}, 1)$ where $\vect x^{(j)} \in \mathbb{Z}_{\ge 0}^d$ satisfies $\vect s^\top \vect x^{(j)} \le B$. Therefore:
    \begin{equation*}
        (\vect y, n) = \sum_{j} \lambda_j (\vect x^{(j)}, 1) = \left( \sum_j \lambda_j \vect x^{(j)}, \sum_j \lambda_j \right).
    \end{equation*}
    Because $(\vect y, n) \in Q_{\text{BP}}$, we have $\vect y = \sum_j \lambda_j \vect x^{(j)} = \vect a$ (all items are packed) and $\sum_j \lambda_j = n \le b$ (at most $b$ bins are used). Conversely, any valid packing into at most $b$ bins directly yields a conic representation $(\vect a, n) \in \operatorname{int.cone}(P_{\text{BP}} \cap \mathbb{Z}^{d+1}) \cap Q_{\text{BP}}$.

    Applying ~\Cref{thm:main} in dimension $d+1$ decides feasibility in time $2^{2^{O(d+1)}} \cdot |I|^{O(1)} = 2^{2^{O(d)}} \cdot |I|^{O(1)}$. Furthermore, since $P_{\text{BP}}$ has its last coordinate fixed to one, its affine dimension is $d$. Hence, by Theorem~\ref{thm:main}, the certificate $\vect \lambda$ has support $|\operatorname{supp}(\vect \lambda)| \le 2^{2d+1}$, yielding a high-multiplicity packing with at most $2^{2d+1}$ distinct configurations.

    Finally, if the objective is to minimize the number of bins, the optimal value $b^*$ satisfies $1 \le b^* \le \sum_{i=1}^d a_i \le 2^{|I|}$. Binary search over $b \in [1, \sum_{i=1}^d a_i]$ requires at most $\lceil \log_2 (\sum_{i=1}^d a_i) \rceil \le |I|$ feasibility queries, preserving the total running time $2^{2^{O(d)}} \cdot |I|^{O(1)}$.
\end{proof}

\section{Conclusion}

We have shown that the Cone and Polytope Intersection Problem of Goemans and Rothvoss \cite{goemans2020}, which encompasses the point-in-cone problem and high-multiplicity bin packing, can be solved in deterministic time $2^{2^{O(d)}} \cdot (\operatorname{enc}(P) + \operatorname{enc}(Q))^{O(1)}$.
This improves upon the $2^{d^{O(d)}} \cdot (\operatorname{enc}(P) + \operatorname{enc}(Q))^{O(1)}$ algorithm of Koana and Kumabe~\cite{koana2026} and matches the double-exponential ETH lower bound for point-in-cone and high-multiplicity bin packing in~\cite{kowalik2024,jansen2026}.
Furthermore, our explicit decomposition algorithm ensures that a sparse certificate of at most $2^{2d+1}$ generators can be reconstructed in single exponential time.
We will discuss further applications of our main result in the final version of this work, including high-multiplicity scheduling.

\section*{Acknowledgments}
Funded by the Deutsche Forschungsgemeinschaft (DFG, German Research Foundation) - Project number 453769249

\section{Declaration of generative AI use}
GPT-6 Astra and Gemini 3.8 Flash were used in formulating theorems and proofs, language editing, notation review, and literature search.
The authors independently verified the mathematical proofs for correctness and take full responsibility for the content of this paper.

\printbibliography

@book{cox2011,
  author            = {David A. Cox and John B. Little and Henry K. Schenck},
  title             = {Toric Varieties},
  series            = {Graduate Studies in Mathematics},
  volume            = {124},
  publisher         = {American Mathematical Society},
  year              = {2011},
  doi               = {10.1090/gsm/124},
  _bib2doi_finished = {true}
}

@phdthesis{dadushPhD,
  author             = {Daniel Nicolas Dadush},
  title              = {Integer Programming, Lattice Algorithms, and Deterministic Volume Estimation},
  school             = {Georgia Institute of Technology},
  year               = {2012},
  timestamp          = {Tue, 29 Apr 2025 01:00:00 +0200},
  biburl             = {https://dblp.org/rec/phd/basesearch/Dadush12.bib},
  bibsource          = {dblp computer science bibliography, https://dblp.org},
  url                = {http://hdl.handle.net/1853/44807},
  _bib2doi_selected  = {dblp:/rec/phd/basesearch/Dadush12.bib},
  _bib2doi_confirmed = {true},
  _bib2doi_finished  = {true}
}

@article{eisenbrand2006,
  author             = {Friedrich Eisenbrand and Gennady Shmonin},
  title              = {Carath{\'e}odory bounds for integer cones},
  journal            = {Operations Research Letters},
  volume             = {34},
  number             = {5},
  pages              = {564--568},
  year               = {2006},
  doi                = {10.1016/j.orl.2005.09.008},
  timestamp          = {Wed, 14 Nov 2018 00:00:00 +0100},
  biburl             = {https://dblp.org/rec/journals/orl/EisenbrandS06.bib},
  bibsource          = {dblp computer science bibliography, https://dblp.org},
  _bib2doi_selected  = {dblp:/rec/journals/orl/EisenbrandS06.bib},
  _bib2doi_confirmed = {true}
}

@article{goemans2020,
  author             = {Michel X. Goemans and Thomas Rothvoss},
  title              = {Polynomiality for Bin Packing with a Constant Number of Item Types},
  journal            = {Journal of the ACM},
  volume             = {67},
  number             = {6},
  pages              = {38:1--38:21},
  year               = {2020},
  doi                = {10.1145/3421750},
  timestamp          = {Sat, 15 Aug 2026 01:00:00 +0200},
  biburl             = {https://dblp.org/rec/journals/jacm/GoemansR20.bib},
  bibsource          = {dblp computer science bibliography, https://dblp.org},
  _bib2doi_selected  = {dblp:/rec/journals/jacm/GoemansR20.bib},
  _bib2doi_confirmed = {true}
}

@book{grotschel1988,
  author             = {Martin Gr{\"o}tschel and L{\'a}szl{\'o} Lov{\'a}sz and Alexander Schrijver},
  title              = {Geometric Algorithms and Combinatorial Optimization},
  publisher          = {Springer-Verlag},
  year               = {1988},
  doi                = {10.1007/978-3-642-97881-4},
  timestamp          = {Tue, 06 Aug 2019 01:00:00 +0200},
  biburl             = {https://dblp.org/rec/books/sp/GLS1988.bib},
  bibsource          = {dblp computer science bibliography, https://dblp.org},
  _bib2doi_selected  = {dblp:/rec/books/sp/GLS1988.bib},
  _bib2doi_confirmed = {true}
}

@inproceedings{jansen2025ciac,
  author             = {Klaus Jansen and Kai Kahler and Esther Zwanger},
  title              = {Exact and Approximate High-Multiplicity Scheduling on Identical Machines},
  booktitle          = {Algorithms and Complexity (CIAC)},
  pages              = {1--17},
  year               = {2025},
  doi                = {10.1007/978-3-031-92932-8_1},
  timestamp          = {Wed, 11 Jun 2025 01:00:00 +0200},
  _bib2doi_selected  = {dblp:/rec/conf/ciac/JansenKZ25.bib},
  _bib2doi_confirmed = {true}
}

@inproceedings{jansen2025esa,
  author             = {Klaus Jansen and Lis Pirotton and Malte Tutas},
  title              = {The Support of Bin Packing Is Exponential},
  booktitle          = {33rd Annual European Symposium on Algorithms (ESA)},
  pages              = {48:1--48:16},
  year               = {2025},
  doi                = {10.4230/LIPIcs.ESA.2025.48},
  timestamp          = {Thu, 02 Oct 2025 01:00:00 +0200},
  biburl             = {https://dblp.org/rec/conf/esa/JansenPT25.bib},
  bibsource          = {dblp computer science bibliography, https://dblp.org},
  _bib2doi_selected  = {dblp:/rec/conf/esa/JansenPT25.bib},
  _bib2doi_confirmed = {true}
}

@inproceedings{jansen2026,
  author             = {Klaus Jansen and Felix Ohnesorge and Lis Pirotton},
  title              = {A Tight Double-Exponential Lower Bound for High-Multiplicity Bin Packing},
  booktitle          = {53rd International Colloquium on Automata, Languages, and Programming (ICALP)},
  year               = {2026},
  doi                = {10.4230/LIPIcs.ICALP.2026.116},
  timestamp          = {Mon, 06 Jul 2026 01:00:00 +0200},
  biburl             = {https://dblp.org/rec/conf/icalp/JansenOP26.bib},
  bibsource          = {dblp computer science bibliography, https://dblp.org},
  _bib2doi_selected  = {dblp:/rec/conf/icalp/JansenOP26.bib},
  _bib2doi_confirmed = {true}
}

@article{jansenKlein2020,
  author             = {Klaus Jansen and Kim-Manuel Klein},
  title              = {About the Structure of the Integer Cone and Its Application to Bin Packing},
  journal            = {Mathematics of Operations Research},
  volume             = {45},
  number             = {4},
  pages              = {1498--1511},
  year               = {2020},
  doi                = {10.1287/moor.2019.1040},
  timestamp          = {Thu, 16 Sep 2021 01:00:00 +0200},
  biburl             = {https://dblp.org/rec/journals/mor/JansenK20.bib},
  bibsource          = {dblp computer science bibliography, https://dblp.org},
  _bib2doi_selected  = {dblp:/rec/journals/mor/JansenK20.bib},
  _bib2doi_confirmed = {true}
}

@article{kannan1987,
  author             = {Ravi Kannan},
  title              = {Minkowski's convex body theorem and integer programming},
  journal            = {Mathematics of Operations Research},
  volume             = {12},
  number             = {3},
  pages              = {415--440},
  year               = {1987},
  doi                = {10.1287/moor.12.3.415},
  timestamp          = {Sat, 30 May 2020 01:00:00 +0200},
  biburl             = {https://dblp.org/rec/journals/mor/Kannan87.bib},
  bibsource          = {dblp computer science bibliography, https://dblp.org},
  _bib2doi_selected  = {dblp:/rec/journals/mor/Kannan87.bib},
  _bib2doi_confirmed = {true}
}

@article{koana2026,
  author            = {Tomohiro Koana and Soh Kumabe},
  title             = {High-Multiplicity Bin Packing is {FPT}},
  journal           = {arXiv preprint arXiv:2609.16923},
  year              = {2026},
  eprint            = {2609.16923},
  archiveprefix     = {arXiv},
  primaryclass      = {cs.DS},
  _bib2doi_finished = {true}
}

@inproceedings{kowalik2024,
  author             = {{\L}ukasz Kowalik and Alexandra Lassota and Konrad Majewski and Micha{\l} Pilipczuk and Marek Soko{\l}owski},
  title              = {Detecting Points in Integer Cones of Polytopes Is Double-Exponentially Hard},
  booktitle          = {ACM-SIAM Symposium on Simplicity in Algorithms (SOSA)},
  pages              = {279--285},
  year               = {2024},
  doi                = {10.1137/1.9781611977936.25},
  timestamp          = {Mon, 03 Mar 2025 00:00:00 +0100},
  biburl             = {https://dblp.org/rec/conf/sosa/KowalikLMPS24.bib},
  bibsource          = {dblp computer science bibliography, https://dblp.org},
  _bib2doi_selected  = {dblp:/rec/conf/sosa/KowalikLMPS24.bib},
  _bib2doi_confirmed = {true}
}

@article{mccormick2001,
  author             = {S. Thomas McCormick and Scott R. Smallwood and Frits C. R. Spieksma},
  title              = {A Polynomial Algorithm for Multiprocessor Scheduling with Two Job Lengths},
  journal            = {Mathematics of Operations Research},
  volume             = {26},
  number             = {1},
  pages              = {31--49},
  year               = {2001},
  doi                = {10.1287/moor.26.1.31.10590},
  timestamp          = {Sun, 28 May 2017 01:00:00 +0200},
  biburl             = {https://dblp.org/rec/journals/mor/McCormickSS01.bib},
  bibsource          = {dblp computer science bibliography, https://dblp.org},
  _bib2doi_selected  = {dblp:/rec/journals/mor/McCormickSS01.bib},
  _bib2doi_confirmed = {true}
}

@article{mnich2018,
  author             = {Matthias Mnich and Ren{\'e} van Bevern},
  title              = {Parameterized complexity of machine scheduling: 15 open problems},
  journal            = {Computers \& Operations Research},
  volume             = {100},
  pages              = {254--261},
  year               = {2018},
  doi                = {10.1016/j.cor.2018.07.020},
  timestamp          = {Sun, 19 Jan 2025 00:00:00 +0100},
  biburl             = {https://dblp.org/rec/journals/cor/MnichB18.bib},
  bibsource          = {dblp computer science bibliography, https://dblp.org},
  _bib2doi_selected  = {dblp:/rec/journals/cor/MnichB18.bib},
  _bib2doi_confirmed = {true}
}

@inproceedings{reis2023,
  author             = {Victor Reis and Thomas Rothvoss},
  title              = {The Subspace Flat Status Conjecture and Faster Integer Programming},
  booktitle          = {64th Annual IEEE Symposium on Foundations of Computer Science (FOCS)},
  pages              = {1636--1647},
  year               = {2023},
  doi                = {10.1109/FOCS57990.2023.00103},
  timestamp          = {Tue, 02 Jan 2024 00:00:00 +0100},
  biburl             = {https://dblp.org/rec/conf/focs/ReisR23.bib},
  bibsource          = {dblp computer science bibliography, https://dblp.org},
  _bib2doi_selected  = {dblp:/rec/conf/focs/ReisR23.bib},
  _bib2doi_confirmed = {true},
  _bib2doi_finished  = {true}
}

@book{rockafellar1970,
  author             = {R. Tyrrell Rockafellar},
  title              = {Convex Analysis},
  publisher          = {Princeton University Press},
  year               = {1970},
  doi                = {10.1515/9781400873173},
  timestamp          = {Thu, 25 Jul 2019 01:00:00 +0200},
  biburl             = {https://dblp.org/rec/books/degruyter/Rockafellar70.bib},
  bibsource          = {dblp computer science bibliography, https://dblp.org},
  isbn               = {978-1-4008-7317-3},
  _bib2doi_selected  = {dblp:/rec/books/degruyter/Rockafellar70.bib},
  _bib2doi_confirmed = {true}
}

@book{schrijver1986,
  author             = {Alexander Schrijver},
  title              = {Theory of Linear and Integer Programming},
  publisher          = {John Wiley \& Sons},
  year               = {1986},
  timestamp          = {Wed, 20 Apr 2011 01:00:00 +0200},
  biburl             = {https://dblp.org/rec/books/daglib/0090562.bib},
  bibsource          = {dblp computer science bibliography, https://dblp.org},
  isbn               = {978-0-471-98232-6},
  _bib2doi_selected  = {dblp:/rec/books/daglib/0090562.bib},
  _bib2doi_confirmed = {true},
  _bib2doi_finished  = {true}
}

@article{DBLP:journals/siamdm/Jansen10,
  author             = {Klaus Jansen},
  title              = {An {EPTAS} for Scheduling Jobs on Uniform Processors: Using an {MILP} Relaxation with a Constant Number of Integral Variables},
  journal            = {{SIAM} J. Discret. Math.},
  volume             = {24},
  number             = {2},
  pages              = {457--485},
  year               = {2010},
  doi                = {10.1137/090749451},
  timestamp          = {Sat, 25 Apr 2020 01:00:00 +0200},
  biburl             = {https://dblp.org/rec/journals/siamdm/Jansen10.bib},
  bibsource          = {dblp computer science bibliography, https://dblp.org},
  _bib2doi_selected  = {dblp:/rec/journals/siamdm/Jansen10.bib},
  _bib2doi_confirmed = {true}
}

@inbook{rothvoss2026,
  author    = {Thomas Rothvoss},
  title     = {The Subspace Flatness Conjecture},
  booktitle = {Proceedings of the International Congress of Mathematicians 2026},
  chapter   = {},
  pages     = {388-406},
  doi       = {10.1137/25M1795546}
}

\end{document}